\newtheorem{thm}{Theorem}
\newtheorem{lem}[thm]{Lemma}
\theoremstyle{definition}
\title{Perturbation theory of KMS states}
\author{S. Ejima and Y. Ogata}
\begin{document}
\maketitle
\begin{abstract}
We extend the new perturbation formula of equilibrium states by Hastings to KMS states of general $W^*$-dynamical systems.
\end{abstract}
\section{Introduction}

Perturbation theory of equilibrium states is a basic problem in quantum statistical physics which has been studied from old time.
For a finite system with Hamiltonian $H$, equilibrium state with inverse temperature $\beta$ is defined 
by the Gibbs state:
\begin{equation}
    \frac{\mathrm{Tr}(e^{-\beta H}A) }{\mathrm{Tr}(e^{-\beta H})}.
\end{equation}
If we perturb $H$ by $V$,  $e^{-\beta H}$ is replaced by $e^{-\beta(H+V)}$.
Using the Duhamel formula, $e^{-\beta(H+V)}$ can be obtained from $e^{-\beta H}$
by multiplying
\begin{equation}\label{eqtz}
        E_V^\tau\left(\frac{i\beta} 2\right) = \sum_{n\geq 0}\left(-\frac\beta 2\right)^n\int_{0\leq s_n\leq \cdots \leq s_1\leq 1}
        ds_1 \cdots ds_n\ \tau_{\frac{i\beta}2s_n }(V)\cdots \tau_{\frac{i\beta}2 s_1 }(V)
    \end{equation}
    and its adjoint from left and right respectively. 
    Here $\tau$ is the dynamics given by $H$, i.e.,  $\tau_t(A):=e^{itH}Ae^{-itH}$.
    Namely, we have
    \begin{align}\label{duhamel}
    e^{-\beta (H+V)}=E_V^\tau\left(\frac{i\beta} 2\right) e^{-\beta H} 
    \left( E_V^\tau\left(\frac{i\beta} 2\right)\right)^*.
    \end{align}
    In the thermodynamic limit, this $E_V^\tau\left(\frac{i\beta} 2\right)$ can diverge. 
  In spite of that, an analogous representation exists in infinite systems as well, thanks to Araki \cite{A}. (See Theorem \ref{p9}.)
Corresponding to the possible divergence of  $E_V^\tau\left(\frac{i\beta} 2\right)$, the representation by Araki is written in terms of an unbounded operator.

In \cite{H}, Hastings introduced a new representation of $e^{-\beta (H+V)}$
with $E_V^\tau\left(\frac{i\beta} 2\right) $ replaced by some bounded operator $\mathcal O(V)$:
 \begin{align}\label{hastings}
    e^{-\beta (H+V)}=\mathcal O(V)e^{-\beta H} 
    \left( \mathcal O(V)\right)^*.
    \end{align}

The main result of this paper is the extension of Hasting's result to $W^*$-dynamical systems. For the notations and known facts about $W^*$-dynamical systems used in this paper, see Appendix \ref{dyn} and Appendix \ref{kms}.
Let $\mathcal{H}$ be a Hilbert space, $\mathfrak{M} \subset B(\mathcal{H})$ a von Neumann algebra.
We denote the set of self-adjoint elements in $\mathfrak M$ by $\mathfrak M_{\rm sa}$.
The predual of $\mathfrak M$ is denoted by $\mathfrak M_*$.
Let $\Omega\in \mathcal{H}$ be a cyclic and separating unit vector for $\mathfrak{M}$. 
    Define the functional 
    $\omega \in \mathfrak{M}_*$ by  
    $\omega(A) = (\Omega,\ A\Omega)$ for all $A\in \mathfrak{M}$.
Let $\Delta$ be the modular operator and $J$ the modular conjugation associated to $(\mathfrak{M},\ \Omega)$.
Let $\sigma$ be the modular automorphism group, and 
$\sigma^Q$ its perturbation by $Q\in{\mathfrak M}_{\mathrm sa}$.
Define Liouvillean  $L$ of $\sigma$ by $L=\log \Delta$.
(See Appendix \ref{kms}.)
For $V\in \mathfrak{M}_{sa}$, define $\Omega_V = e^{\frac{L+V}{2}}\Omega$ and 
    the functional $\omega_V \in \mathfrak{M}_*$ 
    by $\omega_V(A) = \left(\Omega_V,\ A\Omega_V\right)$ for all $A\in \mathfrak{M}$.
In \cite{A}, it was proven that
$\frac{\omega_V}{\omega_V(1)}$ is a $(\sigma^V,-1)$-KMS state.
The explicit expansion of $\omega_V$ is given, in terms of unbounded operator,  
the Liouvillean $L$. This corresponds to the expansion (\ref{duhamel}).
(See Appendix Theorem \ref{p9} (e).)

Our main Theorem is an alternative representation of $\omega_V$, in terms of bounded operators, which corresponds to the expression by Hastings (\ref{hastings}).

Define the real function $f$ on $\mathbb{R}$ by
    \begin{equation}\label{fdef}
        f(t) = \begin{cases}\displaystyle
            \sum_{n=0}^\infty \frac{2}{n+\frac{1}{2}} e^{-2\pi \left(n+\frac{1}{2}\right)\left|t \right|}
            =-2\log\left(\tanh \frac{\pi|t|}2\right)
            ,
            & (t  \neq 0)\\
            0 & (t = 0)
        \end{cases}
    \end{equation}
    Then, $f\in L^1(\mathbb{R})$.
    Throughout this paper we fix this $f$. Properties of $f$ are collected in Appendix \ref{fprop}.
   
    Define the maps $\Phi: \mathfrak{M}_{sa}\times \mathbb{R} \to \mathfrak{M}_{sa}$  by  
    \begin{equation}
        \Phi(V;u) = \int_\mathbb{R} f\left(t\right)\sigma_t^{uV}\left(V\right) dt,  
    \end{equation}
    for all $V\in \mathfrak{M}_{\mathrm sa}$ and $u\in \mathbb{R}$.
    The right hand side integral is defined in the $\sigma$-weak topology \cite{BR1}.
    Furthermore, $\Phi(V;u)$ is continuous in the norm topology with respect to $u\in {\mathbb R}$.
    (See Lemma \ref{lem8.5}.) Properties of  $\Phi(V;u)$ we use in this paper are shown in Section \ref{pvu}.
    Define 
    $\theta: \mathfrak{M}_{sa} \to \mathfrak{M}$ by 
    \begin{equation}
        \theta(V) = \sum_{n=0}^\infty \int_0^1du_1\int_0^{u_1}du_2\dots\int_0^{u_{n-1}}du_{n}
        \Phi\left(V;u_1\right)\dots\Phi\left(V;u_n\right),
    \end{equation}
    for all $V\in \mathfrak{M}_{\mathrm sa}$. The integral and summation is in the norm topology.
    Here is our main Theorem.
    
\begin{thm}\label{mainthm} 
For any $V\in{\mathfrak M}_{\mathrm sa}$, we have
    \begin{equation}
        \omega_V(A) = (\Omega,\ \theta(V)^*A\theta(V)\Omega),\quad A\in{\mathfrak M}.
    \end{equation}
\end{thm}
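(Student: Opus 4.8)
The plan is to prove the equivalent statement that, for every $s\in[0,1]$, the two functionals
\[
\omega_{sV}(A)=(\Omega_{sV},A\Omega_{sV}),\qquad \omega^{(\theta)}_s(A):=(\theta_s(V)\Omega,\,A\,\theta_s(V)\Omega)
\]
on $\mathfrak M$ coincide, where $\theta_s(V)$ is the partial time-ordered integral obtained from the definition of $\theta(V)$ by replacing the outer limit $1$ with $s$, so that $\theta_0(V)=1$ and $\theta_1(V)=\theta(V)$. Since $(\Omega,\theta(V)^*A\theta(V)\Omega)=(\theta(V)\Omega,A\theta(V)\Omega)=\omega^{(\theta)}_1(A)$ and $\omega_{1V}=\omega_V$, the theorem is the case $s=1$. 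At $s=0$ both functionals equal $\omega$, so it suffices to show that $s\mapsto\omega_{sV}$ and $s\mapsto\omega^{(\theta)}_s$ solve one and the same first–order linear differential equation in the Banach space $\mathfrak M_*$ and to invoke uniqueness. I emphasize that I do \emph{not} expect $\theta(V)\Omega=\Omega_V$ as vectors (they should differ by an element of $\mathfrak M'$); only the induced states agree, exactly as in Hastings' identity $\mathcal O(V)e^{-\beta H}\mathcal O(V)^*=e^{-\beta(H+V)}$.

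For the $\theta$–side this is immediate. The bound $\|\Phi(V;u)\|\le\|f\|_{L^1}\|V\|$ together with the norm continuity of $u\mapsto\Phi(V;u)$ (Lemma \ref{lem8.5}) makes the defining series for $\theta_s(V)$ norm–absolutely convergent and shows that $s\mapsto\theta_s(V)$ is norm–$C^1$ with
\[
\frac{d}{ds}\theta_s(V)=\Phi(V;s)\,\theta_s(V),\qquad \theta_0(V)=1 .
\]
Because $\Phi(V;s)\in\mathfrak M_{\mathrm{sa}}$, differentiating $\omega^{(\theta)}_s$ then gives
\[
\frac{d}{ds}\omega^{(\theta)}_s(A)=\omega^{(\theta)}_s\bigl(\Phi(V;s)A+A\,\Phi(V;s)\bigr).
\]

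The crux is to show that $\omega_{sV}$ obeys the \emph{same} equation. Using Araki's expansional representation of $\Omega_{sV}=e^{(L+sV)/2}\Omega$ and its $s$–differentiability (Theorem \ref{p9}), the Duhamel formula yields, after moving the undifferentiated exponential to the right, the one–sided imaginary–time expression
\[
\frac{d}{ds}\omega_{sV}(A)=\int_0^1 \omega_{sV}\bigl(\sigma^{sV}_{ir}(V)\,A\bigr)\,dr .
\]
Here each integrand is the value at $z=ir$ of the two–point function $F(z):=\omega_{sV}(\sigma^{sV}_z(V)A)$, which—since $\omega_{sV}/\omega_{sV}(1)$ is a $(\sigma^{sV},-1)$–KMS state (Theorem \ref{p9}, Appendix \ref{kms})—is analytic on the strip $0\le\operatorname{Im}z\le1$, bounded and continuous up to the boundary, with boundary values $F(t)=\omega_{sV}(\sigma^{sV}_t(V)A)$ and $F(t+i)=\omega_{sV}(A\,\sigma^{sV}_t(V))$. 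I would then invoke the calibrating property of $f$ proved in Appendix \ref{fprop}: the Fourier transform of $f$ is (a constant multiple of) $\tanh(\omega/2)/\omega$, which is precisely the kernel converting a vertical integral of a strip–analytic function into a weighted integral of its two horizontal boundary values, i.e.
\[
\int_0^1 F(ir)\,dr=\int_{\mathbb R} f(t)\bigl(F(t)+F(t+i)\bigr)\,dt
\]
for every such $F$. Applying this and pairing the normal state $\omega_{sV}$ with the $\sigma$–weakly convergent integral defining $\Phi(V;s)$ turns the right–hand side into $\int_{\mathbb R} f(t)\,\omega_{sV}(\sigma^{sV}_t(V)A)\,dt+\int_{\mathbb R} f(t)\,\omega_{sV}(A\,\sigma^{sV}_t(V))\,dt=\omega_{sV}\bigl(\Phi(V;s)A+A\,\Phi(V;s)\bigr)$, exactly the equation satisfied by $\omega^{(\theta)}_s$.

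Finally, the common equation $\frac{d}{ds}\varphi_s(A)=\varphi_s\bigl(\{\Phi(V;s),A\}\bigr)$ is linear, its generator being the pre–adjoint of the bounded map $A\mapsto\Phi(V;s)A+A\Phi(V;s)$ on $\mathfrak M$, and it depends norm–continuously on $s$ (Lemma \ref{lem8.5}); hence the Cauchy problem with initial value $\omega$ has a unique solution in $\mathfrak M_*$. Since both $s\mapsto\omega_{sV}$ and $s\mapsto\omega^{(\theta)}_s$ are such solutions, they coincide on $[0,1]$, and $s=1$ gives the theorem. The main obstacle is the third paragraph: rigorously differentiating $s\mapsto\Omega_{sV}$ in the $W^*$–setting with unbounded $L$ and identifying the derivative with the one–sided imaginary–time integral (controlling domains and the KMS analyticity of $F$ on the closed strip), and then justifying the boundary identity for $f$—that $f$ has exactly the spectral profile $\tanh(\omega/2)/\omega$ needed to trade the imaginary–time Duhamel integral for the real–time modular average, and that $F$ decays enough on the strip for the $L^1$ function $f$ to be paired with it.
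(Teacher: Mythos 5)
Your strategy coincides with the paper's: both sides are shown to solve the Cauchy problem $\frac{d}{ds}\varphi_s(A)=\varphi_s\left(\Phi(V;s)A+A\Phi(V;s)\right)$, $\varphi_0=\omega$, and uniqueness for this linear equation with bounded, norm-continuous generator gives the theorem at $s=1$. Your $\theta$-side computation and the uniqueness step are exactly Lemma \ref{lem7}, and your ``calibrating'' identity is in substance Lemma \ref{lem6}: since $\int f(t)e^{ixt}dt=F(x)$ with $F(x)=\frac{e^x-1}{(e^x+1)x}$ (Lemma \ref{ap}), the spectral theorem for $\Delta_{\Omega_s}$ gives $(1+\Delta_{\Omega_s})^{-1}\int_0^1\Delta_{\Omega_s}^u\,du\,V\Omega_s=\Phi(V;s)\Omega_s$. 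The paper proves the conversion this way, by functional calculus applied to the specific two-point functions at hand, rather than by asserting a boundary-value identity for arbitrary bounded strip-analytic functions; your more general formulation would need a separate Poisson-type representation argument for the strip, which is not in Appendix \ref{fprop}, though when specialized to the KMS two-point functions it reduces to the same spectral computation.

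The genuine gap is structural and sits in your third paragraph: you run the Duhamel/differentiation argument directly for arbitrary $V\in\mathfrak{M}_{\mathrm{sa}}$. For non-analytic $V$ the objects you write down are not defined: $\sigma^{sV}_{ir}(V)$ has no meaning for $r$ up to $1$ (equivalently, $V\Omega_s\in D\left(\Delta_{\Omega_s}^{1/2}\right)$ by Tomita--Takesaki theory, but in general $V\Omega_s\notin D\left(\Delta_{\Omega_s}^{r}\right)$ for $r>1/2$), and the $s$-differentiability of $\Omega_{sV}$ is not contained in Theorem \ref{p9}, which gives only norm continuity (part (d)). The paper establishes differentiability (Lemma \ref{lem1}) only for $\sigma$-entire analytic $V$, using the entire analytic continuation of the expansionals $E^{\sigma}_{sV}(z)$, which exists only in that case; Lemmas \ref{lem2}, \ref{lem6}, \ref{lem7} carry the same hypothesis. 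The passage to general $V$ is then achieved not by overcoming your ``main obstacle'' head-on, but by circumventing it with an approximation argument absent from your proposal: choose $\sigma$-entire analytic $V_m\to V$ strongly, use strong continuity of $V\mapsto\Phi(V;s)$ and the resulting norm convergence $\theta(V_m)\Omega\to\theta(V)\Omega$ (Lemma \ref{lem8.5}, part (3), and Lemma \ref{lem9}) on one side, and Araki's continuity $\omega_{V_m}\to\omega_V$ in norm (Theorem \ref{p9}(d)) on the other. Without this second stage, or some substitute for it, your argument proves the theorem only for analytic $V$, and the identification of $\frac{d}{ds}\omega_{sV}$ for general $V$ remains unjustified.
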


\section{Proof of Theorem \ref{mainthm} for analytic $V$}\label{ana}
In this section, we show  Theorem \ref{mainthm} for analytic $V$.We will use facts in Appendix \ref{kms}.
For $s \in \mathbb{R}$, define the vector $\Omega_s := \Omega_{sV} = e^{\frac{L+sV}{2}}\Omega$ and 
the functional $\omega_s \in \mathfrak{M}_*$ by $\omega_s(A) := \omega_{sV}(A) = (\Omega_s,\ A\Omega_s)$ 
for all $A\in \mathfrak{M}$. Especially, we have $\Omega_0 = \Omega$ and $\omega_0 = \omega$.
For $s\in \mathbb{R}$, the $W^*$-dynamics $\sigma^{sV}$ satisfies
$\sigma_t^{sV}(A):=e^{it(L+sV)}Ae^{-it(L+sV)}$ for all $A\in \mathfrak{M}$ and $t\in \mathbb R$.
Define $\Delta_{\Omega_s} = e^{(L+sV-sJVJ)}$ which is the modular operator of the pair 
$(\mathfrak{M},\Omega_s)$. 
By Theorem~\ref{p9}(d), $\Omega_s$ is continuous in the norm topology with respect to $s\in \mathbb{R}$.

\begin{lem}\label{lem1}
    Let $V \in \mathfrak{M}$ be a $\sigma$-entire analytic and self-adjoint element.
     Then 
    $\Omega_s$ is differentiable with respect to $s\in \mathbb{R}$ in the norm topology, and we have 
    \begin{equation}\label{lem2eq}
        \frac{d}{ds} \Omega_s 
    = \frac{1}{2}\int_0^1 du\ \sigma_{-\frac{iu}{2}}^{sV}(V) \Omega_s.
    \end{equation}
\end{lem}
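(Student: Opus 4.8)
The plan is to realise the claimed derivative as a Duhamel (interpolation) formula and to pass to the limit in the difference quotient. Throughout write $h_s := \frac{L+sV}{2}$, a self-adjoint operator which is a bounded perturbation of $\frac{L}{2}$, so that $\Omega_s = e^{h_s}\Omega$ and, for $c\in[0,1]$, $\sigma^{sV}_{-ic/2}(V) = e^{ch_s}Ve^{-ch_s}$. With this notation the right-hand side of (\ref{lem2eq}) equals $\frac12\int_0^1 e^{uh_s}Ve^{(1-u)h_s}\Omega\,du$, because $e^{uh_s}Ve^{(1-u)h_s}\Omega = \sigma^{sV}_{-iu/2}(V)e^{h_s}\Omega = \sigma^{sV}_{-iu/2}(V)\Omega_s$. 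So the lemma amounts to differentiating $s\mapsto e^{h_s}\Omega$ and recognising the outcome as this Duhamel integral.

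First I would collect the domain and ``pull-through'' facts that make every expression meaningful. A spectral estimate for the self-adjoint $h_s$ shows $\Omega\in\mathrm{Dom}(e^{ch_s})$ for all $c\in[0,1]$ (since $e^{2c\lambda}\le 1+e^{2\lambda}$ and $\Omega\in\mathrm{Dom}(e^{h_s})$ by Theorem \ref{p9}), and that $c\mapsto e^{ch_s}\Omega$ is norm-continuous on $[0,1]$. Because $V$ is $\sigma$-entire analytic and $sV$ is a bounded self-adjoint perturbation, $V$ is $\sigma^{sV}$-analytic in the strip $|\mathrm{Im}\,z|\le\frac12$, so each $\sigma^{sV}_{-ic/2}(V)$, $c\in[0,1]$, lies in $\mathfrak M$ with $M_s:=\sup_{0\le c\le 1}\|\sigma^{sV}_{-ic/2}(V)\|<\infty$, the map $(s,c)\mapsto\sigma^{sV}_{-ic/2}(V)$ is jointly norm-continuous, and the identity $e^{ch_s}V=\sigma^{sV}_{-ic/2}(V)e^{ch_s}$ holds on $\mathrm{Dom}(e^{ch_s})$.

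The technical heart, and the step I expect to be the main obstacle, is controlling products of exponentials of the two self-adjoint operators $h_{s+\eta}$ and $h_s$, which differ only by the bounded operator $\frac\eta2 V$. Set $\Xi_\eta(a,b):=e^{ah_{s+\eta}}e^{bh_s}\Omega$ for $a,b\ge 0$ with $a+b\le 1$. This cannot be handled by norm-resolvent convergence alone, since $e^{ah_{s+\eta}}$ is unbounded and the naive intertwiner $e^{ah_{s+\eta}}e^{-ah_s}$ is not bounded. Instead I would expand $e^{ah_{s+\eta}}$ in its Dyson/Duhamel series in powers of $\frac\eta2 V$ about $h_s$, commute every factor $V$ to the right using the pull-through identity, and let the exponentials telescope: the left-over vector is $e^{(a+b)h_s}\Omega$ and every inserted $V$ becomes some $\sigma^{sV}_{-ic/2}(V)$ with $c\in[0,1]$. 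The $n$-th term is then bounded by $\big(\tfrac{|\eta|}2 M_s\big)^n\frac{a^n}{n!}\,\|\Omega_s\|$, so the series converges in norm and yields both $\sup_{a+b\le 1,\,|\eta|\le 1}\|\Xi_\eta(a,b)\|<\infty$ and $\Xi_\eta(a,b)\to e^{(a+b)h_s}\Omega$ in norm as $\eta\to 0$, uniformly in $(a,b)$. Justifying the domain manipulations inside this expansion is exactly where the $\sigma$-entire analyticity of $V$ is essential.

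With these estimates the remaining steps are routine. Differentiating $r\mapsto e^{rh_{s+\eta}}e^{(1-r)h_s}\Omega$ on $[0,1]$, which is legitimate because its derivative $\frac\eta2\,\sigma^{(s+\eta)V}_{-ir/2}(V)\,\Xi_\eta(r,1-r)$ is norm-continuous and bounded, and then integrating, I obtain
\begin{equation}
\frac{\Omega_{s+\eta}-\Omega_s}{\eta}
=\frac12\int_0^1 e^{uh_{s+\eta}}Ve^{(1-u)h_s}\Omega\,du
=\frac12\int_0^1 \sigma^{(s+\eta)V}_{-iu/2}(V)\,\Xi_\eta(u,1-u)\,du.
\end{equation}
Letting $\eta\to 0$, the integrand converges in norm to $\sigma^{sV}_{-iu/2}(V)\Omega_s$, uniformly in $u$, using the joint continuity of $(s,u)\mapsto\sigma^{sV}_{-iu/2}(V)$ together with the uniform convergence $\Xi_\eta(u,1-u)\to\Omega_s$ from the previous paragraph. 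Dominated convergence then gives norm-differentiability of $s\mapsto\Omega_s$ with derivative $\frac12\int_0^1\sigma^{sV}_{-iu/2}(V)\Omega_s\,du$, which is (\ref{lem2eq}).
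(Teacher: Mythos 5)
Your overall skeleton coincides with the paper's: both proofs interpolate between $\Omega_{s+\eta}$ and $\Omega_{s}$ through the vector $e^{rh_{s+\eta}}e^{(1-r)h_{s}}\Omega$, differentiate in $r$, integrate to obtain the difference-quotient identity, and pass to the limit $\eta\to 0$ by dominated convergence. The difference lies in how the key analytic facts are justified, and this is where your proposal has a genuine gap. Your ``technical heart'' begins by expanding $e^{ah_{s+\eta}}$ in a Dyson/Duhamel series about $h_{s}$ and applying it to $e^{bh_{s}}\Omega$. But $h_{s+\eta}$ is unbounded above (the Liouvillean $L$ has spectrum unbounded in both directions), so $e^{ah_{s+\eta}}$ is not a semigroup of bounded operators and no off-the-shelf Duhamel expansion applies; the validity of that expansion on the vector $e^{bh_{s}}\Omega$ --- including the statement $e^{bh_{s}}\Omega\in D\left(e^{ah_{s+\eta}}\right)$ --- is precisely the theorem to be proven, not a tool one may invoke. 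Your norm bounds do show that the \emph{commuted} series (products of factors $\sigma^{sV}_{-ic/2}(V)$ applied to $e^{(a+b)h_{s}}\Omega$) converges, but convergence of a formal series does not identify its sum with $e^{ah_{s+\eta}}e^{bh_{s}}\Omega$; that identification is the entire content. Saying that ``justifying the domain manipulations is exactly where $\sigma$-entire analyticity is essential'' names the location of the difficulty without supplying the argument.

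The missing mechanism is analytic continuation from imaginary time, via the identity theorem: for purely imaginary exponents all the exponentials are unitaries and the expansion is the standard Dyson series (\ref{eqexp}) together with (\ref{ete}), (\ref{elq}); the desired identity is then an equality of two analytic functions, one of which is manifestly entire because it is written entirely with bounded operators --- the expansionals $E^{\sigma}_{sV}(z)$ and $\sigma_{-iz/2}(V)$ --- acting on a fixed vector. This is exactly how the paper proves (\ref{hajime}), (\ref{tt}) and the key identity (\ref{iia}): it checks equality on $i\mathbb{R}$, invokes analyticity of both sides, and uses the spectral core $D_{s}$ to upgrade the resulting weak identity to the domain statement (\ref{iib}) and the norm identity (\ref{ddee}). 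The same mechanism also underlies your ``pull-through'' identity $e^{ch_{s}}V\xi=\sigma^{sV}_{-ic/2}(V)e^{ch_{s}}\xi$, which you likewise state without proof. If you insert this identity-theorem step (or cite Araki's expansional theory \cite{A}, which is what your commuted series amounts to), your argument closes and becomes essentially the paper's proof; without it, the estimates of your third paragraph concern a series that has not been shown to equal $\Xi_\eta(a,b)$, and the subsequent differentiation and limit steps inherit the same unproven premise.
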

\begin{proof}
First we recall  several equalities which can be shown using identity theorem:
We have $\Omega\in D\left(e^{z(L+sV)}\right)$ and
    \begin{equation}\label{hajime}
        e^{z\left(L+sV\right)}\Omega = E_{sV} (-iz)\Omega,
    \end{equation}
    for all $z\in{\mathbb C}$ and $s\in{\mathbb R}$.
We also have $\Omega \in D\left(e^{z\frac{L+sV}{2}}e^{\left(1-z\right)\frac{L+\tilde{s}V}{2}}\right)$ for all $z\in \mathbb{C}$ and $s$, $\tilde{s} \in \mathbb{R}$, 
    \begin{equation}\label{tt}
        e^{z\left(\frac{L+sV}{2}\right)}e^{\left(1-z\right)\left(\frac{L+\tilde{s}V}{2}\right)}\Omega
        = E_{sV}\left(-\frac{iz}{2}\right)E_{\tilde{s}V}\left(\frac{i\bar{z}}{2}\right)^* e^{\frac{L+\tilde{s}V}{2}} \Omega.
    \end{equation}
The proofs are analogous to that of (\ref{iib}), (\ref{iia}) below that we omit them.
   
    Since the right-hand side of (\ref{tt}) is analytic in norm, the left-hand side is analytic too. Therefore, it is differentiable with respect to $z$ in norm.
    We claim that $\Omega\in D\left(e^{z\frac{L+sV}{2}}Ve^{\left(1-z\right)\frac{L+\tilde{s}V}{2}}\right)$ and
    the derivative is given by 
    \begin{equation}\label{ddee}
        \frac{d}{dz}\left(e^{z\left(\frac{L+sV}{2}\right)}e^{\left(1-z\right)\left(\frac{L+\tilde{s}V}{2}\right)}\Omega\right) = 
        \frac{s-\tilde{s}}{2}e^{z\left(\frac{L+sV}{2}\right)}Ve^{\left(1-z\right)\left(\frac{L+\tilde{s}V}{2}\right)}\Omega
    \end{equation}
    for all $z\in \mathbb{C}$ and $s$, $\tilde{s} \in \mathbb{R}$.
    
     To prove this, define the subspace $D_s \subset \mathcal{H}$ for $s\in\mathbb{R}$, 
     by
    \begin{equation}
        D_s := \bigcup_{N\in\mathbb{N}} E_{[-N,N]}\mathcal{H},
    \end{equation}
    where $\{E_\lambda\}_{\lambda\in\mathbb{R}}$ is the projection-valued measure such that 
    $\int_\mathbb{R} dE_\lambda \lambda = L+sV$.
    Then, $D_s$ is a core for $e^{z(L+sV)}$ for all $z\in \mathbb{C}$.\\
    
    For $\xi \in D_s$, we have
    \begin{align}\label{dde}
        & \left(\xi,
      \frac{d}{dz} e^{{z}\left(\frac{L+sV}{2}\right)} e^{(1-z)\left(\frac{L+\tilde{s}V}{2}\right)}\Omega\right)
      = \frac{d}{dz}\left(e^{\bar{z}\left(\frac{L+sV}{2}\right)}\xi,
        \ e^{(1-z)\left(\frac{L+\tilde{s}V}{2}\right)}\Omega\right)\nonumber
        \\ &= \left(\left(\frac{L+sV}{2}\right)e^{\bar{z}\left(\frac{L+sV}{2}\right)}\xi,\ e^{(1-z)\left(\frac{L+\tilde{s}V}{2}\right)}\Omega\right)
            + \left(e^{\bar{z}\left(\frac{L+sV}{2}\right)}\xi,
            \ -\left(\frac{L+\tilde{s}V}{2}\right)e^{(1-z)\left(\frac{L+\tilde{s}V}{2}\right)}\Omega\right)\nonumber \\
        &= \frac{s-\tilde{s}}{2}\left(e^{\bar{z}\left(\frac{L+sV}{2}\right)}\xi,\ 
        Ve^{(1-z)\left(\frac{L+\tilde{s}V}{2}\right)}\Omega\right)
    .\end{align}
We claim  
\begin{equation}\label{iib}Ve^{(1-z)\left(\frac{L+\tilde{s}V}{2}\right)}\Omega\in 
    D\left(e^{z\left(\frac{L+sV}{2}\right)}\right)\end{equation}
    and
    \begin{align}\label{iia}
        & e^{z\left(\frac{L+sV}{2}\right)}Ve^{(1-z)\left(\frac{L+\tilde{s}V}{2}\right)}\Omega\nonumber\\
        &= E_{sV}^\sigma\left(-\frac{iz}{2}\right)\sigma_{-\frac{iz}{2}}(V)
        \left(E_{sV}^\sigma\left(-\frac{\overline{iz}}{2}\right)\right)^*E_{sV}^\sigma\left(-\frac{iz}{2}\right)
        \left(E_{\tilde{s}V}^\sigma\left(-\frac{\overline{iz}}{2}\right)\right)^*e^{\left(\frac{L+\tilde{s}V}{2}\right)}\Omega
    .\end{align}
    From (\ref{dde}) and (\ref{iib}), we obtain (\ref{ddee}).
    
    To prove (\ref{iib}) and (\ref{iia}), let $\xi$ be an arbitrary element of $D_s$ and
    consider the following two complex functions on $\mathbb{C}$ :
    \begin{equation}\label{iia2}
        z \mapsto \left(e^{\bar{z}\left(\frac{L+sV}{2}\right)}\xi,
        \ Ve^{(1-z)\left(\frac{L+\tilde{s}V}{2}\right)}\Omega\right)
    \end{equation} and \begin{equation}
        z \mapsto \left(\xi,\ E_{sV}^\sigma\left(-\frac{iz}{2}\right)\sigma_{-\frac{iz}{2}}(V)
        \left(E_{sV}^\sigma\left(-\frac{\overline{iz}}{2}\right)\right)^*E_{sV}^\sigma\left(-\frac{iz}{2}\right)
        \left(E_{\tilde{s}V}^\sigma\left(-\frac{\overline{iz}}{2}\right)\right)^*e^{\left(\frac{L+\tilde{s}V}{2}\right)}\Omega\right)
    .\end{equation}
    These two functions coinide on $i\mathbb{R}$ since for $t\in \mathbb{R}$ we have
    \begin{align}
        & \left(e^{\overline{it}\left(\frac{L+sV}{2}\right)}\xi,
        \ Ve^{\left(1-it\right)\left(\frac{L+\tilde{s}V}{2}\right)}\Omega\right) \nonumber\\
        &= \left(\xi,\ e^{it\left(\frac{L+sV}{2}\right)}Ve^{-it\left(\frac{L+\tilde{s}V}{2}\right)}
        e^{\frac{L+\tilde{s}V}{2}}\Omega\right) \nonumber\\
        &= \left(\xi,\ \sigma^{sV}_{\frac{t}{2}}(V)e^{it\left(\frac{L+sV}{2}\right)}e^{-it\frac L2}
        \left(e^{it\left(\frac{L+\tilde{s}V}{2}\right)}e^{-it\frac L2}\right)^*e^{\frac{L+\tilde{s}V}{2}}\Omega\right)\nonumber\\
        &= \left(\xi,\ E_{sV}^{\sigma}\left(\frac{t}{2}\right)\sigma_{\frac{t}{2}}(V)
        \left(E_{sV}^{\sigma}\left(\frac{t}{2}\right)\right)^*E_{sV}^{\sigma}\left(\frac{t}{2}\right)
        \left(E_{\tilde{s}V}^{\sigma}\left(\frac{t}{2}\right)\right)^*e^{\frac{L+\tilde{s}V}{2}}\Omega\right).
    \end{align}
    The second equality follows by (\ref{tlq}) and 
    the third equality follows due to (\ref{ete}), (\ref{elq}).
Since these two functions are analytic (recall (\ref{eqexp}) and (\ref{hajime}) ), they coincide on $\mathbb{C}$.
    Since $D_s$ is a core for $e^{\bar{z}\left(\frac{L+sV}{2}\right)}$, 
    we obtain the claim (\ref{iib}) and (\ref{iia}).
Hence we have completed the proof of (\ref{ddee}).

   Finally, we show (\ref{lem2eq}).
   To prove this, first note that from (\ref{iia}) and the expansion of $E_{sV}^{\sigma}(z)$ (\ref{eqtz}), we obtain the following continuity in the norm:
\begin{align}\label{conti}
\lim_{s\to \tilde s}
e^{z\left(\frac{L+sV}{2}\right)}Ve^{(1-z)\left(\frac{L+\tilde{s}V}{2}\right)}\Omega=
e^{z\left(\frac{L+\tilde sV}{2}\right)}Ve^{(1-z)\left(\frac{L+\tilde{s}V}{2}\right)}\Omega.
\end{align}
For
$s,\tilde s\in\mathbb{R}$, since the map 
    \begin{equation}
       \mathbb{R}\ni  u \mapsto e^{u\left(\frac{L+sV}{2}\right)}e^{(1-u)\left(\frac{L+\tilde{s}V}{2}\right)}\Omega  \in \mathcal{H}   
    \end{equation}
    is differentiable in the norm topology and satisfies
    \begin{align}
        &\frac{d}{du} e^{u\left(\frac{L+sV}{2}\right)}e^{(1-u)\left(\frac{L+\tilde{s}V}{2}\right)}\Omega
        = \left(\frac{s-\tilde{s}}{2}\right)e^{u\left(\frac{L+sV}{2}\right)}V
        e^{(1-u)\left(\frac{L+\tilde{s}V}{2}\right)}\Omega
    ,\end{align}
(from (\ref{ddee})),
    we have 
    \begin{align}
        \Omega_s -\Omega_{\tilde{s}} 
        &= e^{\frac{L+sV}{2}}\Omega - e^{\frac{L+\tilde{s}V}{2}}\Omega \nonumber\\
        &= \int_0^1du\ \frac{d}{du} 
        e^{u\left(\frac{L+sV}{2}\right)}
        e^{(1-u)\left(\frac{L+\tilde{s}V}{2}\right)}\Omega \nonumber\\
        &= \frac{s-\tilde{s}}{2}
\int_0^1 du\ e^{u\left(\frac{L+sV}{2}\right)}V
        e^{(1-u)\left(\frac{L+\tilde{s}V}{2}\right)}\Omega \nonumber\\
           \end{align}
    Therefore from the continuity (\ref{conti}) and (\ref{iia}), using the  Lebesgue's dominated convergence theorem, $\Omega_s$ is differentiable with respect to $s$ in the norm topology and 
    we have
    \begin{align}
        \frac{d}{ds}\Omega_s
        &= \frac{1}{2}\int_0^1du\ e^{u\left(\frac{L+sV}{2}\right)}Ve^{(1-u)\left(\frac{L+sV}{2}\right)}\Omega\nonumber\\
        &= \frac{1}{2}\int_0^1du\ \sigma_{-\frac{iu}{2}}^{sV}(V)\Omega_s.
    \end{align}

\end{proof}
\begin{lem}\label{lem2}
    Let $V \in \mathfrak{M}$ be a self-adjoint $\sigma$-entire analytic element,  
    and $A\in  \mathfrak{M}$. Then $\omega_s(A)$ is differentiable with respect to $s$ and
    \begin{equation} 
    \frac{d}{ds} \omega_s\left(A\right)= \int_0^1 du\left(\Omega_s, \sigma^{sV}_{iu}\left(V\right)A\Omega_s\right)
\end{equation}
for all $s\in \mathbb{R}$.
    
\end{lem}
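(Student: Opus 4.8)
The plan is to differentiate $\omega_s(A)=(\Omega_s, A\Omega_s)$ directly, using Lemma~\ref{lem1} and the product rule for norm-differentiable vector-valued maps, and then to reconcile the resulting expression with the claimed formula by means of the adjoint relation for $\sigma^{sV}$, a linear change of variables, and the KMS condition. Since $s\mapsto\Omega_s$ is norm-differentiable and $A\in\mathfrak M$ is bounded, $s\mapsto(\Omega_s,A\Omega_s)$ is differentiable with
\[
\frac{d}{ds}\omega_s(A)=\Bigl(\tfrac{d}{ds}\Omega_s,\,A\Omega_s\Bigr)+\Bigl(\Omega_s,\,A\,\tfrac{d}{ds}\Omega_s\Bigr),
\]
into which I substitute (\ref{lem2eq}). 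Pulling the real scalar $\tfrac12$ and the (norm-continuous, hence commuting) $du$-integral through the inner product, and using the adjoint relation $\bigl(\sigma^{sV}_{-\frac{iu}{2}}(V)\bigr)^{*}=\sigma^{sV}_{\frac{iu}{2}}(V)$, which holds because $V=V^{*}$ and $\overline{-iu/2}=iu/2$ for real $u$, the two terms become $\tfrac12\int_0^1 du\,\omega_s(\sigma^{sV}_{\frac{iu}{2}}(V)A)$ and $\tfrac12\int_0^1 du\,\omega_s(A\,\sigma^{sV}_{-\frac{iu}{2}}(V))$.

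Next I would substitute $u=2v$ to rewrite each term as an integral over $v\in[0,\tfrac12]$, reducing the claim to the identity
\[
\int_0^{1/2}dv\,\omega_s\!\bigl(\sigma^{sV}_{iv}(V)A\bigr)
+\int_0^{1/2}dv\,\omega_s\!\bigl(A\,\sigma^{sV}_{-iv}(V)\bigr)
=\int_0^1 dv\,\omega_s\!\bigl(\sigma^{sV}_{iv}(V)A\bigr).
\]
The first summand already matches the $[0,\tfrac12]$ part of the target, so it remains only to transform the second summand into $\int_{1/2}^1 dv\,\omega_s(\sigma^{sV}_{iv}(V)A)$.

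This is the crux, and I would carry it out with the KMS condition. Since $\omega_s/\omega_s(1)$ is a $(\sigma^{sV},-1)$-KMS state and each $\sigma^{sV}_{iw}(V)$ is again a $\sigma^{sV}$-entire analytic element of $\mathfrak M$ (a bounded perturbation preserves entire analyticity, so the Dyson series for $\sigma^{sV}_z(V)$ converges for all $z$), I may apply the relation $\omega_s\bigl(A\,\sigma^{sV}_{-i}(B)\bigr)=\omega_s(BA)$ with $B=\sigma^{sV}_{iw}(V)$. Using the group law $\sigma^{sV}_{-i}\bigl(\sigma^{sV}_{iw}(V)\bigr)=\sigma^{sV}_{i(w-1)}(V)$ this gives
\[
\omega_s\!\bigl(A\,\sigma^{sV}_{i(w-1)}(V)\bigr)=\omega_s\!\bigl(\sigma^{sV}_{iw}(V)A\bigr),\qquad w\in\bigl[\tfrac12,1\bigr].
\]
Setting $v=1-w\in[0,\tfrac12]$ turns the left-hand side into $\omega_s(A\,\sigma^{sV}_{-iv}(V))$, so the change of variables $v\mapsto w=1-v$ rewrites the second summand as $\int_{1/2}^1 dw\,\omega_s(\sigma^{sV}_{iw}(V)A)$, exactly the missing half of the target integral; adding the two pieces yields the asserted formula. (Equivalently, one may note that $z\mapsto\omega_s(A\,\sigma^{sV}_z(V))$ and $z\mapsto\omega_s(\sigma^{sV}_z(V)A)$ are entire, so the boundary identity from the KMS condition propagates to $\omega_s(A\,\sigma^{sV}_{z-i}(V))=\omega_s(\sigma^{sV}_z(V)A)$ for all $z$ by the identity theorem, and then specialize $z=iw$.)

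I expect the only genuinely non-formal point to be this KMS step: verifying that $\sigma^{sV}_{iw}(V)$ is a legitimate $\sigma^{sV}$-entire analytic element to which the KMS condition applies, and keeping the imaginary shift $-i$ (corresponding to $\beta=-1$) consistent in sign throughout. Everything else—differentiation under the integral, pulling real scalars and the integral through the sesquilinear form, and the affine substitutions $u=2v$ and $v=1-w$—is routine once the norm-continuity of $u\mapsto\sigma^{sV}_{-iu/2}(V)$ and of $s\mapsto\Omega_s$ is recorded.
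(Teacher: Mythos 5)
Your proposal is correct, and its overall skeleton --- product rule via Lemma~\ref{lem1}, the adjoint relation $\bigl(\sigma^{sV}_{-iu/2}(V)\bigr)^*=\sigma^{sV}_{iu/2}(V)$, a reflection identity of the form $\omega_s\bigl(A\,\sigma^{sV}_{-iv}(V)\bigr)=\omega_s\bigl(\sigma^{sV}_{i(1-v)}(V)A\bigr)$, and then gluing the $[0,\tfrac12]$ and $[\tfrac12,1]$ pieces --- is exactly that of the paper; the genuine difference is how the reflection identity is established. The paper proves it by a direct Tomita--Takesaki computation: writing $A^*\Omega_s=S_{\Omega_s}A\Omega_s$, using $(S_{\Omega_s}\xi,S_{\Omega_s}\eta)=(\Delta_{\Omega_s}^{1/2}\eta,\Delta_{\Omega_s}^{1/2}\xi)$ together with $\Delta_{\Omega_s}^{z}\Omega_s=\Omega_s$, so that $\sigma^{sV}_{iu/2}(V)\Omega_s=\Delta_{\Omega_s}^{-u/2}V\Omega_s$ and the two half-powers of $\Delta_{\Omega_s}$ combine into $\Delta_{\Omega_s}^{1-u/2}$. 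You instead invoke the KMS property of $\omega_s/\omega_s(1)$ for $(\sigma^{sV},-1)$ --- i.e.\ Theorem~\ref{p9}(b), Araki's theorem --- applied to $B=\sigma^{sV}_{iw}(V)$, plus the group law for analytic continuations. Both routes ultimately rest on Theorem~\ref{p9}: the paper uses part (c) (the identification $\sigma^{sV}_t=\mathrm{Ad}\,\Delta^{it}_{\Omega_s}$ and $\log\Delta_{\Omega_s}=L+sV-sJVJ$), you use part (b). Your route is shorter and more conceptual, at the cost of the two side-verifications you correctly flag: that $V$ is $\sigma^{sV}$-entire analytic (this is exactly the statement in Appendix~\ref{dyn} that entire analyticity survives bounded perturbations, applicable because $V$ is $\sigma$-entire analytic by hypothesis), and that $\sigma^{sV}_{iw}(V)$ is again $\sigma^{sV}$-entire analytic with $\sigma^{sV}_{z'}\bigl(\sigma^{sV}_{z}(V)\bigr)=\sigma^{sV}_{z+z'}(V)$, which follows by a standard identity-theorem argument. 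The paper's computation avoids the KMS axiom and this analyticity bookkeeping entirely, working only with vectors in the domain of $\Delta_{\Omega_s}^{1/2}$, in exchange for a longer chain of modular-operator manipulations; either argument is a complete proof of the lemma.
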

\begin{proof}
    By Lemma~\ref{lem1}, $\omega_s(A)$ is differentiable and we have
    \begin{align}
    \frac{d}{ds} \omega_s(A) 
    &= \frac{d}{ds}\left(\Omega_s,\ A\Omega_s\right)
    \nonumber \\
    &= \left(\frac{1}{2}\int_0^1du\ \sigma^{sV}_{-\frac{iu}{2}}(V)\Omega_s,\ A\Omega_s\right) + 
    \left(\Omega_s,\ A\left(\frac{1}{2}\int_0^1du\ \sigma_{-\frac{iu}{2}}^{sV}(V)\Omega_s\right)\right)
    \nonumber \\
    &= \frac{1}{2}\int_0^1du\left(\left(\Omega_s,\ \sigma_{\frac{iu}{2}}^{sV}(V)A\Omega_s\right)+
    \left(\Omega_s,\ A\sigma_{-\frac{iu}{2}}^{sV}(V)\Omega_s\right)  \right).
    \end{align}
    We obtain 
    \begin{equation}
        \left(\Omega_s,\ A\sigma_{-\frac{iu}{2}}^{sV}(V)\Omega_s\right)
        = \left(\Omega_s,\ \sigma_{i\left(1-\frac{u}{2}\right)}^{sV}(V)A\Omega_s\right)
    \end{equation}
    since we have
    \begin{align}
        &\left(\Omega_s,\ A\sigma_{-\frac{iu}{2}}^{sV}(V)\Omega_s\right)
        = \left(A^*\Omega_s,\ \sigma_{-\frac{iu}{2}}^{sV}(V)\Omega_s\right)
        \nonumber \\
        &= \left(S_{\Omega_s}A\Omega_s,\ S_{\Omega_s}\sigma_{\frac{iu}{2}}^{sV}(V)\Omega_s\right)
         = \left(\Delta_{\Omega_s}^{\frac{1}{2}}\sigma_{\frac{iu}{2}}^{sV}(V)\Omega_s,
        \ \Delta_{\Omega_s}^{\frac{1}{2}}A\Omega_s\right)
         \nonumber\\
        &= \left(\Delta_{\Omega_s}^{1-\frac{u}{2}}V\Omega_s,\ A\Omega_s\right) 
         = \left(\sigma_{-i\left(1-\frac{u}{2}\right)}^{sV}(V)\Omega_s,\ A\Omega_s\right)
        \nonumber\\
        &= \left(\Omega_s,\ \sigma_{i\left(1-\frac{u}{2}\right)}^{sV}(V)A\Omega_s\right)
        .
    \end{align}
    Therefore we have
    \begin{align}
        \frac{d}{ds} \omega_s(A) 
        &=  \frac{1}{2}\int_0^1du\left(\Omega_s,\ \sigma_{\frac{iu}{2}}^{sV}(V)A\Omega_s\right)+
        \frac{1}{2}\int_0^1du \left(\Omega_s,\ \sigma_{i\left(1-\frac{u}{2}\right)}^{sV}(V)A\Omega_s\right)
         \nonumber\\
        &= \int_0^{\frac{1}{2}}du\left(\Omega_s,\ \sigma_{iu}^{sV}(V)A\Omega_s\right)
        + \int_{\frac{1}{2}}^{1}du\left(\Omega_s,\ \sigma_{iu}^{sV}(V)A\Omega_s\right)
         \nonumber\\
        &= \int_{0}^{1}du
        \left(\Omega_s,\ \sigma_{iu}^{sV}(V)
        A\Omega_{s}\right).
    \end{align}
\end{proof}
\begin{lem}\label{lem6}
    Let $V \in \mathfrak{M}$ be a $\sigma$-entire analytic and self-adjoint element,
    and $A\in \mathfrak{M}$.
    Let $f\in L^1(\mathbb R)$ be the function defined in (\ref{fdef}). 
    Then for all $s\in \mathbb{R}$, we have
    \begin{equation}\label{lem6eq}
        \int_0^1du \left(\Omega_s, \sigma_{iu}^{sV}\left(V\right) A\Omega_s\right) = 
        \left(\Phi(V;s)\Omega_s, A\Omega_s\right) + \left(\Omega_s, A\Phi(V;s)\Omega_s\right),
    \end{equation} 
    where 
    \begin{equation}
        \Phi(V;s) = \int_\mathbb{R} f\left(t\right)\sigma_t^{sV}\left(V\right) dt    
    .\end{equation}
\end{lem}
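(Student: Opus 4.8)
The plan is to reduce both sides of (\ref{lem6eq}) to integrals against a single scalar spectral measure attached to the modular operator $\Delta_{\Omega_s}$, and then to invoke the Fourier--$\tanh$ identity for $f$ recorded in Appendix~\ref{fprop}. Write $L_s=\log\Delta_{\Omega_s}$, let $\{E_\lambda\}$ be its spectral resolution, and introduce the complex measure $d\nu(\lambda)=d\left(V\Omega_s,\,E_\lambda A\Omega_s\right)$, which has finite total variation with $\int d|\nu|\le\|V\Omega_s\|\,\|A\Omega_s\|$. Since $V\in\mathfrak M_{\mathrm sa}$ and $A\in\mathfrak M$, both $V\Omega_s$ and $A\Omega_s$ lie in $D\left(\Delta_{\Omega_s}^{1/2}\right)$, so by Cauchy--Schwarz $\int e^{u\lambda}\,d|\nu|(\lambda)\le\|\Delta_{\Omega_s}^{u/2}V\Omega_s\|\,\|\Delta_{\Omega_s}^{u/2}A\Omega_s\|<\infty$ for every $u\in[0,1]$; this uniform bound licenses all interchanges below.

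The key object is $G(z):=\left(\Omega_s,\sigma_z^{sV}(V)A\Omega_s\right)$, which is entire because $V$ is $\sigma^{sV}$-entire analytic (a consequence of its $\sigma$-analyticity). Using $\sigma_t^{sV}(A)=\Delta_{\Omega_s}^{it}A\Delta_{\Omega_s}^{-it}$ and $\Delta_{\Omega_s}\Omega_s=\Omega_s$, for real $t$ one gets
\begin{equation}
G(t)=\left(V\Omega_s,\Delta_{\Omega_s}^{-it}A\Omega_s\right)=\int e^{-it\lambda}\,d\nu(\lambda).
\end{equation}
The right-hand integral defines a function analytic on the closed strip $0\le\mathrm{Im}\,z\le1$ (by the bound of the first paragraph, which gives absolute convergence up to $\mathrm{Im}\,z=1$), and it agrees with the entire function $G$ on $\mathbb R$; hence $G(z)=\int e^{-iz\lambda}\,d\nu(\lambda)$ throughout that strip. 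Evaluating at $z=iu$ gives $\left(\Omega_s,\sigma_{iu}^{sV}(V)A\Omega_s\right)=\int e^{u\lambda}\,d\nu(\lambda)$ for $u\in[0,1]$, so integrating in $u$ and applying Fubini yields the left-hand side of (\ref{lem6eq}) as $\int\frac{e^\lambda-1}{\lambda}\,d\nu(\lambda)$.

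For the right-hand side, expand $\Phi(V;s)=\int_{\mathbb R}f(t)\sigma_t^{sV}(V)\,dt$ and set $\hat f(\lambda):=\int_{\mathbb R}f(t)e^{-it\lambda}\,dt$. The first term is $\left(\Phi(V;s)\Omega_s,A\Omega_s\right)=\int_{\mathbb R}f(t)\,G(t)\,dt=\int\hat f(\lambda)\,d\nu(\lambda)$ by Fubini ($f\in L^1$, $G$ bounded on $\mathbb R$). For the second term I invoke the KMS/modular identity already established in Lemma~\ref{lem2}, namely $\left(\Omega_s,A\sigma_w^{sV}(V)\Omega_s\right)=\left(\Omega_s,\sigma_{w+i}^{sV}(V)A\Omega_s\right)=G(w+i)$; taking $w=t$ real gives $\left(\Omega_s,A\sigma_t^{sV}(V)\Omega_s\right)=G(t+i)=\int e^{(1-it)\lambda}\,d\nu(\lambda)$. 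Hence, by Fubini (now using $\int e^{\lambda}\,d|\nu|<\infty$), $\left(\Omega_s,A\Phi(V;s)\Omega_s\right)=\int e^{\lambda}\hat f(\lambda)\,d\nu(\lambda)$, so the entire right-hand side of (\ref{lem6eq}) equals $\int\left(1+e^\lambda\right)\hat f(\lambda)\,d\nu(\lambda)$.

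Comparing the two computations, the lemma reduces to the pointwise scalar identity $\left(1+e^\lambda\right)\hat f(\lambda)=\frac{e^\lambda-1}{\lambda}$, equivalently $\hat f(\lambda)=\lambda^{-1}\tanh(\lambda/2)$, which is precisely the Fourier transform property of $f$ collected in Appendix~\ref{fprop}; this closes the argument. I expect the main obstacle to be the analytic-continuation step establishing $G(z)=\int e^{-iz\lambda}\,d\nu(\lambda)$ up to and including the boundary $\mathrm{Im}\,z=1$: this is what simultaneously controls the left-hand side at $z=iu$ and, through the KMS identity, the $A\Phi$ term at $z=t+i$, and it rests entirely on the membership $V\Omega_s,A\Omega_s\in D\left(\Delta_{\Omega_s}^{1/2}\right)$ securing $\int e^{u\lambda}\,d|\nu|<\infty$ for $u$ up to $1$. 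Once that and the scalar identity are in hand, the two Fubini interchanges are routine consequences of $f\in L^1$ and the finiteness of $\nu$.
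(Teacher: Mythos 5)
Your proof is correct, but it takes a genuinely different route from the paper's. The paper first proves (\ref{lem6eq}) for $\sigma$-entire analytic $A$: the left-hand side is rewritten as $\left((1+\Delta_{\Omega_s})^{-1}\int_0^1du\,\Delta_{\Omega_s}^{u}V\Omega_s,\ (1+\Delta_{\Omega_s})A\Omega_s\right)$ --- this is where analyticity of $A$ enters, to guarantee $A\Omega_s\in D(\Delta_{\Omega_s})$ --- then the operator identity $(1+\Delta_{\Omega_s})^{-1}\int_0^1du\,\Delta_{\Omega_s}^{u}V\Omega_s=\Phi(V;s)\Omega_s$ is obtained by spectral calculus from the Fourier identity (\ref{ff}), the factor $(1+\Delta_{\Omega_s})$ is split into the two terms of the right-hand side via $J\Delta_{\Omega_s}^{1/2}$ manipulations, and finally the statement is extended to all $A\in\mathfrak M$ by strong density of $\mathfrak M_\sigma$. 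You instead scalarize everything through the complex spectral measure $\nu$ of $\log\Delta_{\Omega_s}$, express both sides as integrals against $d\nu$, and close with the pointwise identity $(1+e^\lambda)\hat f(\lambda)=(e^\lambda-1)/\lambda$, which is the same Appendix \ref{fprop} input (your $\hat f$ equals the paper's $F$ since $f$ is even). What your route buys: it handles arbitrary $A\in\mathfrak M$ from the outset, since it only uses $V\Omega_s,A\Omega_s\in D(\Delta_{\Omega_s}^{1/2})$, which holds for every element of $\mathfrak M$; so no density/continuity step is needed. What it costs is the strip argument, which the paper's functional-calculus route avoids entirely.

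Two steps you treat as immediate should be written out. First, the identification $G(z)=\int e^{-iz\lambda}\,d\nu(\lambda)$ on the closed strip: the integral is analytic only in the \emph{open} strip and continuous up to the boundary, and it agrees with the entire function $G$ on the lower edge $\mathbb R$, which is not an interior subset; you need a reflection argument (set $D=G-H$ on the strip, extend by $D(\bar z):=\overline{D(z)}$, apply Morera's theorem across $\mathbb R$ and then the identity theorem) to get $D\equiv0$, and then continuity to reach the upper edge $\mathrm{Im}\,z=1$ --- which is precisely where you evaluate for the $A\Phi$ term, so this boundary case cannot be waved away. Second, Lemma \ref{lem2} establishes $(\Omega_s,A\sigma_w^{sV}(V)\Omega_s)=(\Omega_s,\sigma_{w+i}^{sV}(V)A\Omega_s)$ only for $w$ in the imaginary segment $i[-1/2,0]$; to use it at real $w=t$ you must either observe that both sides are entire in $w$ (by $\sigma^{sV}$-entire analyticity of $V$) and invoke the identity theorem, or, more directly, apply the KMS condition of Theorem \ref{p9}(b) to $B=\sigma_t^{sV}(V)$. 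Both repairs are routine, so I regard your proof as correct once these details are supplied.
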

\begin{proof}
    Let $G$ be a function defined on $\mathbb{R}_{>0}$ by
    \begin{equation}
        G(\lambda) = (1+\lambda)^{-1}\int_0^1du\ \lambda^u,\quad \lambda>0.
    \end{equation}
    Define a function $F$ on $\mathbb{R}$ by
    \begin{equation}\label{Fdef}
        F(x) = \begin{cases}\displaystyle
            \frac{e^x-1}{e^x+1} \frac{1}{x} & (x \neq 0)   \\
            \\
            \displaystyle\frac{1}{2} & (x = 0)
        \end{cases}.
    \end{equation}
      It follows that $F(x) = G\left(e^x\right)$ for all $x\in \mathbb{R}$.
   Therefore, from (\ref{ff}) in Appendix \ref{fprop}, we have
    \begin{equation}
        G\left(e^x\right) = \int_{\mathbb{R}} f(t)e^{itx}dt,
    \end{equation}
    for all $x\in\mathbb  R$.
    We now show (\ref{lem6eq}) for any $\sigma$-entire analytic element $A$.
    We have 
    \begin{align}
        \int_0^1 du\left(\Omega_s,\ \sigma_{iu}^{sV}(V)A\Omega_s\right)
        &= \int_0^1du\left(\sigma_{-iu}^{sV}(V)\Omega_s,\ A\Omega_s\right)
         \nonumber\\
        &= \left(\int_0^1du \Delta_{\Omega_s}^uV\Omega_s,\ A\Omega_s\right)
         \nonumber\\
        &= \left(\int_0^1du\Delta_{\Omega_s}^uV\Omega_s,\ 
        (1+\Delta_{\Omega_s})^{-1}(1+\Delta_{\Omega_s})A\Omega_s\right)
         \nonumber\\
        &= \left((1+\Delta_{\Omega_s})^{-1}\int_0^1du\Delta_{\Omega_s}^uV\Omega_s,\ 
        (1+\Delta_{\Omega_s})A\Omega_s\right).
    \end{align}
    We have $A\Omega_s\in D(\Delta_{\Omega_s})$ due to the $\sigma$-entire analyticity of $A$.
    
    We have
    \begin{align}
        \left(1+\Delta_{\Omega_s}\right)^{-1} \int_0^1 du \Delta_{\Omega_s}^u V\Omega_s
        = G\left(\Delta_{\Omega_s}\right)V\Omega_s
        = \int_{\mathbb{R}} dt f(t)\Delta_{\Omega_s}^{it}V\Delta_{\Omega_s}^{-it}\Omega_s= \Phi(V;s)\Omega_s .
    \end{align}
    Therefore we have
    \begin{align}
        \int_0^1du \left(\Omega_s, \sigma_{iu}^{sV}\left(V\right) A\Omega_s\right)
        &= \left((1+\Delta_{\Omega_s})^{-1}\int_0^1du\Delta_{\Omega_s}^uV\Omega_s,\
        (1+\Delta_{\Omega_s})A\Omega_s\right)\nonumber\\ 
        &= \big(\Phi(V;s)\Omega_s,\ (1+\Delta_{\Omega_s})A\Omega_s\big)\nonumber\\
        &= (\Phi(V;s)\Omega_s,\ A\Omega_s) + (\Phi(V;s)\Omega_s,\ \Delta_{\Omega_s}A\Omega_s))\nonumber\\
        &= (\Omega_s,\ \Phi(V;s)A\Omega_s) + 
        (\Delta^{\frac{1}{2}}_{\Omega_s}\Phi(V;s)\Omega_s,\ \Delta_{\Omega_s}^{\frac{1}{2}}A\Omega_s)\nonumber\\
        &= (\Omega_s,\ \Phi(V;s)A\Omega_s) +
        (J\Delta_{\Omega_s}^{\frac{1}{2}}A\Omega_s,\ J\Delta_{\Omega_s}^{\frac{1}{2}}\Phi(V;s)\Omega_s)\nonumber\\
        &=(\Omega_s,\ \Phi(V;s)A\Omega_s) + 
        (\Omega_s,\ A\Phi(V;s)\Omega_s).
    \end{align}
    We repeatedly used the self-adjointness of $\Phi(V;s)$.
    
    To extend (\ref{lem6eq}) to general $A\in{\mathfrak M}$, we just need to notice that
    both sides of (\ref{lem6eq}) are continuous in $A$ with respect to the strong-topology
    and recall that ${\mathfrak M}_\sigma$ is strong-dense in $\mathfrak M$.
\end{proof}

Due to Lemma \ref{lem8.5}, we can define the map $\theta: \mathfrak{M}_{sa} \times \mathbb{R} \to \mathfrak{M}_{sa}$ by
\begin{equation}
    \theta(V;s) = \sum_{n=0}^\infty \int_0^sdu_1\int_0^{u_1}du_2\dots\int_0^{u_{n-1}}du_{n}
    \Phi\left(V;u_1\right)\dots\Phi\left(V;u_n\right).
\end{equation}
for all $V\in \mathfrak{M}_{sa}$ and $s \in \mathbb{R}$.

We now prove the statement of Theorem \ref{mainthm} for $\sigma$-entire analytic $V$.
\begin{lem}\label{lem7}
    Let $V \in \mathfrak{M}$ be a $\sigma$-entire analytic and self-adjoint element. Then we have
    \begin{equation}
        \omega_s\left(A\right) = \omega\left(\theta(V;s)^* A \theta(V;s)\right),\quad s\in{\mathbb R},\quad
        A\in{\mathfrak M}.
    \end{equation}
\end{lem}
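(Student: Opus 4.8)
The plan is to recast the identity as a uniqueness statement for a linear first-order differential equation in the predual $\mathfrak{M}_*$. Introduce the $\mathfrak{M}_*$-valued function $s\mapsto\psi_s$ defined by
\begin{equation}
\psi_s(A) := \omega\left(\theta(V;s)^* A\,\theta(V;s)\right) = \left(\theta(V;s)\Omega,\ A\,\theta(V;s)\Omega\right),\qquad A\in\mathfrak{M}.
\end{equation}
Since $\theta(V;0)=1$ (only the $n=0$ term survives, the others being integrals over a degenerate simplex), we have $\psi_0=\omega=\omega_0$, so the two curves $s\mapsto\omega_s$ and $s\mapsto\psi_s$ share the same value at $s=0$. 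The goal is to show they obey the same evolution equation and then invoke uniqueness.

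First I would extract the evolution equation for $\omega_s$ by combining Lemma~\ref{lem2} and Lemma~\ref{lem6}. Because $\Phi(V;s)$ is self-adjoint, the right-hand side of (\ref{lem6eq}) equals $\omega_s(\Phi(V;s)A)+\omega_s(A\,\Phi(V;s))$, whence
\begin{equation}\label{omegaode}
\frac{d}{ds}\,\omega_s(A) = \omega_s\!\left(\Phi(V;s)A + A\,\Phi(V;s)\right),\qquad A\in\mathfrak{M}.
\end{equation}

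Next I would analyze $\theta(V;s)$. Splitting off the outermost integration variable $u_1$ and reindexing exhibits $\theta$ as a time-ordered exponential satisfying the integral equation $\theta(V;s)=1+\int_0^s \Phi(V;u)\,\theta(V;u)\,du$; since $u\mapsto\Phi(V;u)$ is norm-continuous (Lemma~\ref{lem8.5}) and the defining series for $\theta(V;\cdot)$ converges in norm uniformly on compact intervals, $\theta(V;s)$ is norm-differentiable with $\frac{d}{ds}\theta(V;s)=\Phi(V;s)\,\theta(V;s)$. Differentiating $\psi_s(A)=(\theta(V;s)\Omega,\ A\,\theta(V;s)\Omega)$ and again using the self-adjointness of $\Phi(V;s)$ then yields exactly (\ref{omegaode}) with $\omega_s$ replaced by $\psi_s$.

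Finally I would close the argument by uniqueness. Set $\mu_s:=\omega_s-\psi_s\in\mathfrak{M}_*$; both curves are norm-continuous in $s$ (by Theorem~\ref{p9}(d) for $\Omega_s$, and by norm-continuity of $\theta(V;s)\Omega$), so $\mu_0=0$ and, by the fundamental theorem of calculus applied to (\ref{omegaode}), $\mu_s(A)=\int_0^s \mu_u\!\left(\Phi(V;u)A+A\,\Phi(V;u)\right)du$. Estimating $\|\Phi(V;u)A+A\,\Phi(V;u)\|\le 2\|\Phi(V;u)\|\,\|A\|$ and taking the supremum over $\|A\|\le1$ gives $\|\mu_s\|\le 2\int_0^s \|\Phi(V;u)\|\,\|\mu_u\|\,du$, so Gronwall's inequality forces $\mu_s=0$ for all $s\in\mathbb{R}$, which is the assertion. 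I expect the main obstacle to be the combinatorial bookkeeping that identifies $\theta(V;\cdot)$ as the solution of $\theta'=\Phi(V;s)\theta$, together with the care needed to justify term-by-term norm-differentiation of the series and the Gronwall step; the analytic substance is already carried by Lemmas~\ref{lem1}, \ref{lem2} and~\ref{lem6}.
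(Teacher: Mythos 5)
Your proposal is correct and follows essentially the same route as the paper: both derive the evolution equation $\frac{d}{ds}\,\omega_s(A)=\omega_s\left(\Phi(V;s)A+A\,\Phi(V;s)\right)$ from Lemmas~\ref{lem2} and~\ref{lem6}, show that $\phi_s(A)=\omega\left(\theta(V;s)^*A\,\theta(V;s)\right)$ satisfies the same equation via the norm-differentiability $\frac{d}{ds}\theta(V;s)=\Phi(V;s)\,\theta(V;s)$ (from Lemma~\ref{lem8.5}), and conclude by uniqueness with the common initial value at $s=0$. The only difference is that you spell out the uniqueness step with an explicit Gronwall estimate in $\mathfrak{M}_*$, which the paper asserts without detail.
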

\begin{proof}
For each $s\in{\mathbb R}$, we define the map $\mathcal{L}_{s}: \mathfrak{M} \rightarrow \mathfrak{M}$ by
    \begin{equation}
        \mathcal{L}_{s}(X) = \Phi(V;s)X + X\Phi(V;s),\quad X\in{\mathfrak M}.
    \end{equation}

    By Lemma~\ref{lem2} and Lemma~\ref{lem6}, for any $A\in{\mathfrak M}$, we have 
    \begin{align}
        \frac{d}{ds} \omega_s(A)
        &= \int_0^1du \left(\Omega_s,\ \sigma_{iu}^{sV}(V)A\Omega_s\right)\nonumber\\
        &= \left(\Phi(V;s)\Omega_s,\ A\Omega_s\right) + \left(\Omega_s,\ A\Phi(V;s)\Omega_s\right).\nonumber\\
        &=\omega_s \circ \mathcal{L}_{s}(A).
    \end{align}
          
    Next, define the state $\phi_s\in \mathfrak{M}_*$ by
    \begin{equation}
        \phi_s(A) = \omega(\theta(V;s)^*A\theta(V;s)),\quad A\in{\mathfrak M}.
    \end{equation} 
    We now show that $\phi_s$ satisfies the same differential equation as $\omega_s$:
    \begin{equation}\label{50}
        \frac{d}{ds} \phi_s(A)
         = \phi_s \circ \mathcal{L}_s(A),\quad A\in{\mathfrak M}.
    \end{equation}
    By Lemma~\ref{lem8.5}, the map
    $\mathbb{R} \ni s\mapsto \Phi(V;s)\in \mathfrak{M}$ is continuous in the norm topology. Therefore $\theta(V;s)$ is differentiable 
    in the norm topology and we have 
    \begin{equation}
        \frac{d}{ds} \theta(V;s) = \Phi(V;s) \theta(V;s).
    \end{equation}
    Since $\Phi(V;s)$ is self-adjoint, we obtain (\ref{50}):
    \begin{align}
        \frac{d}{ds}\phi_s(A)
        &= \omega\left(\theta(V;s)^*\Phi(V;s)^* A\theta(V;s)\right) 
        + \omega\left(\theta(V;s)^*A\Phi(V;s)\theta(V;s)\right)\nonumber\\
        &= \omega\left(\theta(V;s)^*\mathcal{L}_s(A)\theta(V;s)\right)\nonumber\\
        &= \phi_s \circ \mathcal{L}_s(A).
    \end{align}
   As $\phi_s$ and $\omega_s$ satisfies the same differential equation and $\omega_0=\phi_0$,
      we obtain $ \omega_s=\phi_s $ for all $s\in \mathbb{R}$.
\end{proof}
\section{Proof of Theorem \ref{main}}\label{main}
In order to extend the result to general $V$, we need the following continuity of
$\theta(V;s)\Omega$.
\begin{lem}\label{lem9}
    Let $\{V_m\}_{m\in \mathbb{N}}  \subset \mathfrak{M}_{sa}$  be a sequence such that
    $V_m \rightarrow V\in \mathfrak{M}_{\mathrm sa}$ strongly as $m\to \infty$. Then we have 
    $\|\theta(V_m; s)\Omega- \theta(V;s)\Omega\| \to 0$. 
\end{lem}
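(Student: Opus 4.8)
The plan is to run a standard ``uniform tail plus finitely many terms'' argument, reducing everything to the strong convergence of the single building block $\Phi(V_m;u)\to\Phi(V;u)$. First I would invoke the uniform boundedness principle: a strongly convergent sequence of bounded operators is norm-bounded, so $C:=\sup_m\|V_m\|<\infty$ and $\|V\|\le C$. Since each $\sigma^{uW}_t$ is a $*$-automorphism we have $\|\sigma^{uW}_t(W)\|=\|W\|$, whence $\|\Phi(W;u)\|\le \|f\|_{L^1}\|W\|$ for every self-adjoint $W$. Consequently the norm of the $n$-th term of $\theta(V_m;s)$ is bounded by $\frac{|s|^n}{n!}(\|f\|_{L^1}C)^n$, uniformly in $m$, so the series $\theta(V_m;s)$ and $\theta(V;s)$ are absolutely norm-convergent uniformly in $m$. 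Given $\varepsilon>0$ I can therefore choose $N$ so that the tails $\sum_{n>N}$ contribute less than $\varepsilon$ to $\|\theta(V_m;s)\Omega-\theta(V;s)\Omega\|$ for all $m$, and it remains to treat the finitely many terms $n=0,\dots,N$.

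For a fixed $n$, the integrand $(u_1,\dots,u_n)\mapsto \Phi(V_m;u_1)\cdots\Phi(V_m;u_n)\Omega$ is norm-continuous by Lemma \ref{lem8.5} and uniformly bounded by $(\|f\|_{L^1}C)^n\|\Omega\|$ on the bounded simplex $0\le u_n\le\cdots\le u_1\le s$, which has finite measure. Hence by Lebesgue's dominated convergence theorem it suffices to prove, for each fixed $(u_1,\dots,u_n)$, that $\Phi(V_m;u_1)\cdots\Phi(V_m;u_n)\Omega\to\Phi(V;u_1)\cdots\Phi(V;u_n)\Omega$ in norm. Writing the difference as a telescoping sum
\begin{equation*}
\sum_{k=1}^n \Phi(V_m;u_1)\cdots\Phi(V_m;u_{k-1})\bigl(\Phi(V_m;u_k)-\Phi(V;u_k)\bigr)\Phi(V;u_{k+1})\cdots\Phi(V;u_n)\Omega,
\end{equation*}
and using that the left factors are bounded by powers of $\|f\|_{L^1}C$ while the right factors act on the fixed vector $\Phi(V;u_{k+1})\cdots\Phi(V;u_n)\Omega$, the whole problem reduces to showing that $\Phi(V_m;u)\to\Phi(V;u)$ in the strong operator topology for each fixed $u$.

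This last point is the crux. For a fixed vector $\psi$ we have $\Phi(V_m;u)\psi=\int_{\mathbb R}f(t)\,\sigma^{uV_m}_t(V_m)\psi\,dt$, and the integrand is dominated by $|f(t)|\,C\,\|\psi\|\in L^1(\mathbb R)$, so by dominated convergence it is enough to show $\sigma^{uV_m}_t(V_m)\psi\to\sigma^{uV}_t(V)\psi$ in norm for each fixed $t$ and $u$, where $\sigma^{uW}_t(W)\psi=e^{it(L+uW)}W e^{-it(L+uW)}\psi$. Since $L$ is self-adjoint and the $uV_m$ are uniformly bounded self-adjoint perturbations with common domain $D(L)$, and $(L+uV_m)\xi=L\xi+uV_m\xi\to L\xi+uV\xi=(L+uV)\xi$ for every $\xi$ in the core $D(L)$, the operators $L+uV_m$ converge to $L+uV$ in the strong resolvent sense; consequently the unitary groups $e^{\pm it(L+uV_m)}$ converge strongly to $e^{\pm it(L+uV)}$. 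Combining this with the strong convergence $V_m\to V$ and peeling off the three factors one at a time, each time splitting into a uniformly bounded operator applied to a norm-convergent vector plus a strongly convergent operator applied to a fixed vector, yields $\sigma^{uV_m}_t(V_m)\psi\to\sigma^{uV}_t(V)\psi$.

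The main obstacle is precisely this strong convergence of the elementary factor $\Phi(V_m;u)$: because $V_m\to V$ only strongly one cannot expect norm convergence of $\Phi(V_m;u)$, and the triple conjugation $e^{it(L+uV_m)}V_m e^{-it(L+uV_m)}$ entangles the perturbation inside unbounded generators, so the argument must pass through strong resolvent convergence of the perturbed Liouvilleans rather than through any direct norm estimate. Once this is in hand, the two dominated-convergence passages together with the uniform tail bound assemble into the claim.
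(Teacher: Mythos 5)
Your proof is correct, and its outer skeleton matches the paper's: uniform boundedness principle to get $c=\sup_m\|V_m\|<\infty$, the bound $\|\Phi(W;u)\|\le\|f\|_{L^1}\|W\|$, a telescoping decomposition of the product difference, and dominated convergence over the simplex and over the series (the paper applies dominated convergence to the whole sum at once where you split off a uniformly small tail, but these are interchangeable). The genuine difference lies in the crux you correctly isolated: the strong convergence $\Phi(V_m;u)\to\Phi(V;u)$. The paper does not re-derive this inside Lemma \ref{lem9}; it simply cites the third statement of Lemma \ref{lem8.5}, whose proof expands $\sigma_t^{uV_m}(V_m)=E^\sigma_{uV_m}(t)\,\sigma_t(V_m)\,E^\sigma_{uV_m}(t)^*$ as a norm-convergent double series in the operators ${\mathcal T}_n(V_m;t)$, cuts off the $t$-integral, and invokes Lemma \ref{p10} on strong convergence of products $\Delta^{it_n}Q_m\cdots\Delta^{it_1}Q_m\xi$ --- an argument that stays entirely at the level of bounded operators and the modular group. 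You instead prove the same convergence spectrally: $D(L)$ is a common core on which $(L+uV_m)\xi\to(L+uV)\xi$, hence $L+uV_m\to L+uV$ in the strong resolvent sense, hence the unitary groups $e^{\pm it(L+uV_m)}$ converge strongly, and then you peel off the three factors of $e^{it(L+uV_m)}V_m e^{-it(L+uV_m)}$ using the uniform bound $\|V_m\|\le c$. This is legitimate in the paper's setting, since $\sigma^Q_t(A)=e^{it(L+Q)}Ae^{-it(L+Q)}$ by Theorem \ref{p9}(c) and (\ref{tlq}), and it buys brevity and standard textbook machinery (convergence on a common core implies strong resolvent convergence, which implies strong convergence of the generated unitary groups). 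What it gives up is what the paper's route provides: a purely algebraic, expansion-based proof that never touches the unbounded generators, and which therefore doubles as a reusable statement (Lemma \ref{lem8.5}(3)) formulated without reference to the standard-form implementation of the dynamics. One small point to make explicit in your version: the integral defining $\Phi$ is a $\sigma$-weak integral, so when you estimate $\|\Phi(V_m;u)\psi-\Phi(V;u)\psi\|$ by pulling the norm inside $\int f(t)\,(\cdots)\,dt$ you are implicitly identifying it, on a fixed vector, with a Bochner integral of the norm-continuous, $L^1$-dominated integrand $t\mapsto\sigma^{uV_m}_t(V_m)\psi$; this is harmless (the paper does the same in the proof of Lemma \ref{lem8.5}) but deserves a sentence.
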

\begin{proof}
 From the uniform boundedness principle, we have $c:=\sup_m\left\Vert V_m\right\Vert<\infty$.
By Lemma \ref{lem8.5},
$\Phi(V_m; s)$ converges to $\Phi(V;s)$ strongly.
From this convergence and the uniform boundedness
\begin{align}
\sup_m\|\Phi(V_m; s)\|\le \| f\|_{L^1}\sup_m\|V_m\|\le c\| f\|_{L^1},
\end{align}
we have
\begin{align}\label{mono}
\lim_{m\to \infty}
 \left\Vert
    \Phi\left(V_m;u_1\right)\dots\Phi\left(V_m;u_n\right)\Omega
- \Phi\left(V;u_1\right)\dots\Phi\left(V;u_n\right)\Omega
\right\Vert
=0.
\end{align}
We also have
\begin{align}\label{hana}
 \left\Vert
    \Phi\left(V_m;u_1\right)\dots\Phi\left(V_m;u_n\right)\Omega
- \Phi\left(V;u_1\right)\dots\Phi\left(V;u_n\right)\Omega
\right\Vert
\le\left(
c^{n}+\| V\|^n
\right)\|f\|_{L^1}^n,
\end{align}
for any $u_1,\ldots, u_n\in \mathbb R$ and $n\in\mathbb N$.
From the definition of $\theta(V;s)$, we have 
\begin{align}
&\left\Vert \theta(V_m; s)\Omega- \theta(V;s)\Omega\right\Vert\\
&\le
\sum_{n=0}^\infty \int_0^sdu_1\int_0^{u_1}du_2\dots\int_0^{u_{n-1}}du_{n}
  \left\Vert\left(
    \Phi\left(V_m;u_1\right)\dots\Phi\left(V_m;u_n\right)
- \Phi\left(V;u_1\right)\dots\Phi\left(V;u_n\right)\right)
\Omega
\right\Vert
\end{align}
From (\ref{mono}) and (\ref{hana}), applying Lebesgue dominated convergence Theorem, we obtain
\begin{align}
\lim_{m\to\infty}\left\Vert \theta(V_m; s)\Omega- \theta(V;s)\Omega\right\Vert=0.
\end{align}
   \end{proof} 
\begin{proof}[Proof of Theorem~\ref{mainthm}]
    Note that $\theta(V) = \theta(V;1)$.
    There exists a sequence $\{V_m\} \subset \mathfrak{M}_{sa}$ such that 
    $V_m$ is $\sigma$-entirely analytic for all $m\in\mathbb{N}$ and that $V_m \to V$ strongly.
    Let $A\in \mathfrak{M}$.
    By Lemma~\ref{lem9}, we have
    \begin{equation}
        \left(\Omega,\ \theta(V_m)^*A\theta(V_m)\Omega\right) \to 
        \left(\Omega,\ \theta(V)^*A\theta(V)\Omega\right)
    \end{equation}
    as $m\to\infty$.\\
    By Theorem~\ref{p9}(d) , $\omega_{V_m}$ converges to $\omega_{V}$ in the norm topology.
    By Lemma~\ref{lem7} we have
    \begin{equation}
        \omega_{V_m}(A) = \left(\Omega,\ \theta(V_m)^*A\theta(V_m)\right).
    \end{equation} 
    for all $m\in \mathbb{N}$.
    Therefore we conclude that
    \begin{equation}
        \omega_V(A) = \left(\Omega,\ \theta(V)^*A\theta(V)\Omega\right).  
    \end{equation}
\end{proof}

\section{Properties of $ \Phi(V;s) $}\label{pvu}
In this section, we collect some properties of $\Phi$ used in Section \ref{ana} and Section\ref{main}.
\begin{lem}\label{lem8.5}
\begin{enumerate}
\item  For all $V\in \mathfrak{M}_{\mathrm sa}$  and $s\in\mathbb R$, we have
\begin{align}
\|  \Phi(V;s)\|
\le\|f\|_{L^1}\|V\|.
\end{align}
\item For all $V\in \mathfrak{M}_{\mathrm sa}$ the map
    defined by $\mathbb{R}\ni s \mapsto \Phi(V;s)\in \mathfrak{M}$ is norm-continuous.
    \item 
    For any $s\in{\mathbb R}$ and sequence $V_N$ in $\mathfrak M_{\mathrm sa}$ such that
    $V_N\to V\in{\mathfrak M}_{sa}$ in the strong operator topology, we have
    $\Phi(V_N;s)\to \Phi(V;s)$ in the strong operator topology.  
\end{enumerate}
\end{lem}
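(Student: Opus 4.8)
The plan is to base all three parts on the cocycle representation
$\sigma_t^{sV}(V) = E^\sigma_{sV}(t)\,\sigma_t(V)\,E^\sigma_{sV}(t)^*$, where $E^\sigma_{sV}(t) = e^{it(L+sV)}e^{-itL}$ is the unitary perturbation cocycle already appearing in Section~\ref{ana}. Being a product of unitaries, $E^\sigma_{sV}(t)$ has norm one, so each $\sigma_t^{sV}$ is isometric and $\|\sigma_t^{sV}(V)\| = \|V\|$. For part (i) I would simply test against a normal functional: for $\phi\in\mathfrak M_*$ with $\|\phi\|\le1$, the definition of the $\sigma$-weak integral gives $\phi(\Phi(V;s)) = \int_{\mathbb R} f(t)\,\phi(\sigma_t^{sV}(V))\,dt$, and the bound $|\phi(\sigma_t^{sV}(V))|\le\|V\|$ yields $|\phi(\Phi(V;s))|\le\|f\|_{L^1}\|V\|$; taking the supremum over such $\phi$ proves $\|\Phi(V;s)\|\le\|f\|_{L^1}\|V\|$ since $\mathfrak M = (\mathfrak M_*)^*$. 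The one genuine obstacle, pervasive to parts (ii) and (iii), is that $t\mapsto\sigma_t^{sV}(V)$ is only $\sigma$-weakly (not norm) continuous, so $\Phi(V;s)$ cannot be treated as a Bochner integral and every limit must be reduced to a scalar dominated-convergence statement obtained by pairing against a functional or a vector.

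For part (ii) I would first establish pointwise-in-$t$ norm continuity of $s\mapsto\sigma_t^{sV}(V)$. A Duhamel computation gives
\[
e^{it(L+sV)} - e^{it(L+s'V)} = i(s-s')\int_0^t e^{ir(L+sV)}\,V\,e^{i(t-r)(L+s'V)}\,dr,
\]
which, read off weakly since the groups are only strongly continuous in $r$, yields $\|E^\sigma_{sV}(t)-E^\sigma_{s'V}(t)\|\le|s-s'|\,|t|\,\|V\|$. Using unitarity of the cocycles this gives $\|\sigma_t^{sV}(V)-\sigma_t^{s'V}(V)\|\le 2\|V\|^2|s-s'|\,|t|\to0$ as $s'\to s$, with the uniform domination $\|\sigma_t^{sV}(V)-\sigma_t^{s'V}(V)\|\le2\|V\|$. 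Pairing $\Phi(V;s)-\Phi(V;s')$ against unit functionals and applying scalar dominated convergence to $\int_{\mathbb R}|f(t)|\,\|\sigma_t^{sV}(V)-\sigma_t^{s'V}(V)\|\,dt$, dominated by $2\|V\|\,|f|\in L^1$, proves norm continuity. If one prefers an explicit modulus, integrability of $t\mapsto|t|\,f(t)$ (which follows from the decay of $f$ in Appendix~\ref{fprop}) gives the Lipschitz estimate $\|\Phi(V;s)-\Phi(V;s')\|\le 2\|V\|^2|s-s'|\int_{\mathbb R}|t|\,|f(t)|\,dt$.

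For part (iii) the heart of the matter is strong convergence of the perturbed groups. From the uniform bound $c:=\sup_N\|V_N\|<\infty$ and the resolvent identity $(L+sV_N-i)^{-1}-(L+sV-i)^{-1} = (L+sV_N-i)^{-1}\,s(V-V_N)\,(L+sV-i)^{-1}$, strong convergence $V_N\to V$ forces $L+sV_N\to L+sV$ in the strong resolvent sense (apply both sides to a vector and use $\|(L+sV_N-i)^{-1}\|\le1$ together with $(V-V_N)(L+sV-i)^{-1}\xi\to0$). The Trotter--Kato theorem then gives $e^{\pm it(L+sV_N)}\to e^{\pm it(L+sV)}$ strongly, hence $E^\sigma_{sV_N}(t)$, its adjoint, and $\sigma_t(V_N)$ all converge strongly; combining these, and using $\|V_N\|\le c$ to control the product applied to vectors, yields $\sigma_t^{sV_N}(V_N)\xi\to\sigma_t^{sV}(V)\xi$ in $\mathcal H$ for each fixed $t$ and $\xi$.

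Finally I would transport this pointwise convergence through the $\sigma$-weak integral. For fixed $\xi,\eta\in\mathcal H$,
\[
\bigl(\eta,(\Phi(V_N;s)-\Phi(V;s))\xi\bigr) = \int_{\mathbb R}f(t)\,\bigl(\eta,(\sigma_t^{sV_N}(V_N)-\sigma_t^{sV}(V))\xi\bigr)\,dt,
\]
where the integrand tends to $0$ pointwise and is dominated by $|f(t)|(c+\|V\|)\|\xi\|\,\|\eta\|\in L^1$; scalar dominated convergence followed by the supremum over $\|\eta\|\le1$ gives $\|(\Phi(V_N;s)-\Phi(V;s))\xi\|\to0$, i.e.\ strong convergence. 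The principal difficulty to keep in view is precisely this passage from the $\sigma$-weak operator integral to genuine scalar integrals: it is the reason each estimate above is deliberately arranged as a pairing against a normal functional or a vector rather than as a direct operator-norm or Bochner argument, the strong-resolvent convergence of $L+sV_N$ being the key technical input that makes the pointwise convergence available.
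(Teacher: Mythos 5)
Your route is genuinely different from the paper's and, except for one step, it works. The paper never touches resolvents or semigroup theory: it expands $E^\sigma_{sV}(t)=\sum_n s^n\mathcal{T}_n(V;t)$ as a Dyson series, splits the $t$-integral at a cutoff $M_\varepsilon$ with $\int_{|t|\ge M_\varepsilon}|f|<\varepsilon$ (needed because its term-by-term bounds grow like $e^{2|t|\Vert V\Vert}$, which $f$ cannot dominate for large $\Vert V\Vert$), and for part (iii) invokes Lemma \ref{p10} to get strong convergence of the products $\mathcal{T}_n(V_N;t)\sigma_t(V_N)\mathcal{T}_m(V_N;t)^*$. You instead use the implementation $\sigma^{sV}_t(A)=e^{it(L+sV)}Ae^{-it(L+sV)}$, legitimate here by (\ref{tlq}) and (\ref{elq}), to get via Duhamel the pointwise bound $\Vert\sigma_t^{sV}(V)-\sigma_t^{s'V}(V)\Vert\le 2\Vert V\Vert^2|t|\,|s-s'|$ --- which yields part (ii) with an explicit Lipschitz modulus, somewhat more than the paper proves --- and, via strong resolvent convergence plus Trotter--Kato, the pointwise strong convergence $\sigma_t^{sV_N}(V_N)\xi\to\sigma_t^{sV}(V)\xi$ needed for part (iii). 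Parts (i) and (ii) are correct as written, and your part (i) is what the paper dismisses as trivial.

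The gap is in the last inference of part (iii). From ``for each fixed $\eta$, $(\eta,(\Phi(V_N;s)-\Phi(V;s))\xi)\to 0$'' you cannot conclude $\Vert(\Phi(V_N;s)-\Phi(V;s))\xi\Vert\to 0$ by ``taking the supremum over $\Vert\eta\Vert\le 1$'': that interchanges $\lim_N$ with $\sup_\eta$, and dominated convergence applied separately for each $\eta$ provides no uniformity in $\eta$. As written you have only proved weak operator convergence, while the lemma asserts strong convergence. The repair is cheap and uses only ingredients you already have: apply Cauchy--Schwarz \emph{inside} the integral to obtain the $\eta$-free bound
\begin{equation*}
\Vert(\Phi(V_N;s)-\Phi(V;s))\xi\Vert \;\le\; \int_{\mathbb R}|f(t)|\,\bigl\Vert\bigl(\sigma_t^{sV_N}(V_N)-\sigma_t^{sV}(V)\bigr)\xi\bigr\Vert\,dt
\end{equation*}
(equivalently: since $t\mapsto\sigma_t^{sV}(V)\xi$ is norm-continuous and bounded and $f\in L^1$, the $\sigma$-weak integral applied to the vector $\xi$ is a Bochner integral in $\mathcal H$), and then apply scalar dominated convergence to this integral, whose integrand tends to $0$ pointwise by your Trotter--Kato step and is dominated by $(c+\Vert V\Vert)\Vert\xi\Vert\,|f(t)|$. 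This is exactly how the paper's own estimate (\ref{fuku}) is arranged: the vector norm sits inside the $t$-integral from the start, and the supremum over test functionals is never taken after the limit.
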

\begin{proof}
   For each $V\in{\mathfrak M}_{\mathrm sa}$, $t\in\mathbb R$ and $n=0,1,\ldots$,  set 
   \begin{align}
        {\mathcal T}_n(V;t):=  i^n \int_{0\leq s_n\leq \cdots \leq s_1\leq t}        ds_1 \cdots ds_n
       \left(
        \sigma_{s_n}(V)\cdots \sigma_{s_1}(V)
        \right).
   \end{align}
       Here the integral is defined in $\sigma$-weak topology.
       Note that 
       the map $\mathbb R\ni t\mapsto {\mathcal T}_n(V;t)\in{\mathfrak M}$ is norm-continuous.
       Furthermore, its norm is bounded
       \begin{align}\label{tb}
       \left\Vert {\mathcal T}_n(V;t)\right\Vert\le\frac{|t|^n\left\Vert V\right\Vert^n}{n!}.
       \end{align}
       We have
       \begin{align}
        E_{sV}^\sigma (t)=\sum_{n\geq 0} s^n{\mathcal T}_n(V;t),
       \end{align}
       where the summation converges in norm.
       From (\ref{ete}), we have
        \begin{equation}\label{nm}
            \sigma_t^{sV}(A) = E_{sV}^\sigma(t)\sigma_t(A)E_{sV}^\sigma(t)^*
            =\sum_{n\geq 0} \sum_{m\geq 0}s^{n+m}{\mathcal T}_n(V;t)
            \sigma_t(A)
            \left({\mathcal T}_m(V;t)\right)^*,
        \end{equation}
       where the summation in the last term converges in the norm topology.

For each $\varepsilon>0$, we fix $M_\varepsilon>0$ such that
\begin{align}\label{me}
\int_{|t|\ge M_\varepsilon}dt  \left\vert f(t)\right\vert <\varepsilon. 
\end{align}
    
    Now let us prove the claim of the Lemma. The first statement of the Lemma is trivial.
    Let us prove the second statement of the Lemma.
    For all $V\in{\mathfrak M}_{sa}$, $\xi\in{\mathcal H}$, $s,\tilde s\in\mathbb R$
    and $\varepsilon>0$, using (\ref{tb}), (\ref{me}), we have
    \begin{align}
        &\left\Vert \left(\Phi(V;\tilde{s}) - \Phi(V;s)\right)\xi \right\Vert \nonumber\\
        &\le \int_{|t|\le M_\varepsilon}dt\ f(t)
        \left\Vert \left( \sigma_t^{\tilde{s}V}(V)-\sigma_t^{sV}(V)\right)\xi \right\Vert
    +\int_{|t|\ge M_\varepsilon}dt\ 
       f(t)\left\Vert \left(\sigma_t^{\tilde{s}V}(V)-\sigma_t^{sV}(V)\right)\xi\right\Vert\nonumber\\
    &\le 
      \int_{|t|\le M_\varepsilon}
      dt\ f(t)
       \sum_{n\geq 0} \sum_{m\geq 0}\left| \tilde s^{n+m}-s^{n+m}\right|\left\Vert
       \left({\mathcal T}_n(V;t)
            \sigma_t(V)
            \left({\mathcal T}_m(V;t)\right)^*\right)\xi\right\Vert
            +2\varepsilon \left\Vert \xi\right\Vert    \left\Vert V\right\Vert     \nonumber\\     
    &     \le 
   \left( \sum_{n\geq 0} \sum_{m\geq 0}\left\vert \tilde s^{n+m}-s^{n+m}\right\vert
    \left\Vert f\right\Vert_{L^1}\frac{|M_\varepsilon|^{n+m}}{n!m!}\left\Vert V\right\Vert^{n+m+1}
            +2\varepsilon     \left\Vert V\right\Vert  \right)         \left\Vert \xi\right\Vert
    \end{align}
    Therefore we have 
    \begin{equation}
        \|\Phi(V;\tilde{s}) - \Phi(V;s)\| \leq 
   \sum_{n\geq 0} \sum_{m\geq 0}\left\vert \tilde s^{n+m}-s^{n+m}\right\vert
    \left\Vert f\right\Vert_{L^1}\frac{|M_\varepsilon|^{n+m}\left\Vert V\right\Vert^{n+m+1}}{n!m!}
            +2\varepsilon     \left\Vert V\right\Vert,
    \end{equation}
    for any $\varepsilon>0$, $s,\tilde s\in{\mathbb R}$ and $V\in{\mathfrak M}_{\mathrm sa}$. 
    As $\tilde s\to s$, the first term converges to $0$ by Lebesgue dominated convergence theorem and we obtain
    \begin{align}
 \limsup_{\tilde s\to s}   \|\Phi(V;\tilde{s}) - \Phi(V;s)\| \leq 
 2\varepsilon     \left\Vert V\right\Vert,
    \end{align}
   for any $\varepsilon>0$, $s\in{\mathbb R}$ and $V\in{\mathfrak M}_{\mathrm sa}$. Therefore, we have
   \begin{align}
    \lim_{\tilde s\to s}   \|\Phi(V;\tilde{s}) - \Phi(V;s)\| =0,
   \end{align}
   for any $s\in\mathbb R$ and $V\in{\mathfrak M}_{\mathrm sa}$.
   This proves the second statement of Lemma.
   
   To prove the third statement, let  $V_N$ in $\mathfrak M_{\mathrm sa}$ be a sequence such that
    $V_N\to V\in{\mathfrak M}_{sa}$ in the strong operator topology.
   By the uniform boundedness principle, we have $c:=\sup_N\left\Vert V_N\right\Vert<\infty$.
   Fix arbitrary $\varepsilon>0$, $\xi\in \mathcal H$, and $s\in\mathbb R$.
   We have
 \begin{align}\label{fuku}
        &\left\Vert \left(\Phi(V_N;{s}) - \Phi(V;s)\right)\xi \right\Vert \nonumber\\
        &\le \int_{|t|\le M_\varepsilon}dt\ f(t)
        \left\Vert \left( \sigma_t^{{s}V_N}(V_N)-\sigma_t^{sV}(V)\right)\xi \right\Vert
    +\int_{|t|\ge M_\varepsilon}dt\ 
       f(t)\left\Vert \left(\sigma_t^{sV_N}(V_N)-\sigma_t^{sV}(V)\right)\xi\right\Vert\nonumber\\
    &\le 
      \int_{|t|\le M_\varepsilon}
      dt\ f(t)
       \sum_{n\geq 0} \sum_{m\geq 0}\left| s^{n+m}\right|\left\Vert
       \left({\mathcal T}_n(V_N;t)
            \sigma_t(V_N)
            \left({\mathcal T}_m(V_N;t)\right)^*-
            {\mathcal T}_n(V;t)
            \sigma_t(V)
            \left({\mathcal T}_m(V;t)\right)^*\right)\xi\right\Vert \nonumber\\     
            &+\varepsilon \left\Vert \xi\right\Vert    (\left\Vert V\right\Vert  +c) , 
        \end{align}
        for all $N\in\mathbb N$.
Note that
\begin{align}\label{ttt}
&\left\Vert
       \left({\mathcal T}_n(V_N;t)
            \sigma_t(V_N)
            \left({\mathcal T}_m(V_N;t)\right)^*-
            {\mathcal T}_n(V;t)
            \sigma_t(V)
            \left({\mathcal T}_m(V;t)\right)^*\right)\xi\right\Vert\nonumber\\
&       \le
       \int_{0\leq s_n\leq \cdots \leq s_1\leq t}        ds_1 \cdots ds_n
       \int_{0\leq u_n\leq \cdots \leq u_1\leq t}        du_1 \cdots du_m
      \nonumber\\
     & \left\Vert\left(
       \sigma_{s_n}(V_N)\cdots \sigma_{s_1}(V_N)  \sigma_t(V_N) \sigma_{u_1}(V_N)\cdots \sigma_{u_m}(V_N)-
        \sigma_{s_n}(V)\cdots \sigma_{s_1}(V)  \sigma_t(V) \sigma_{u_1}(V)\cdots \sigma_{u_n}(V)
        \right)\xi
        \right\Vert.
\end{align}
From Lemma \ref{p10}, the integrand on the last line converges to $0$ as $N\to\infty$, point wise.
On the other hand,
they are bounded by $\Vert V\Vert^{n+m+1}+c^{n+m+1}$.
Therefore, from Lebesgue dominated convergence Theorem, we see that 
\begin{align}\label{aza}
\lim_{N\to\infty}
\left\Vert
       \left({\mathcal T}_n(V_N;t)
            \sigma_t(V_N)
            \left({\mathcal T}_m(V_N;t)\right)^*-
            {\mathcal T}_n(V;t)
            \sigma_t(V)
            \left({\mathcal T}_m(V;t)\right)^*\right)\xi\right\Vert=0.
\end{align}
   From (\ref{ttt}), we also have
    \begin{align}\label{rashi}
   \left\Vert
       \left({\mathcal T}_n(V_N;t)
            \sigma_t(V_N)
            \left({\mathcal T}_m(V_N;t)\right)^*-
            {\mathcal T}_n(V;t)
            \sigma_t(V)
            \left({\mathcal T}_m(V;t)\right)^*\right)\xi\right\Vert
            \le \frac{M_\varepsilon^{n+m}}{n!m!}\left(\Vert V\Vert^{n+m+1}+c^{n+m+1}\right)\left\Vert\xi\right\Vert,
   \end{align}
   for any $|t|\le M_\varepsilon$.
      From (\ref{aza}) and (\ref{rashi}), applying the Lebesgue dominated convergence Theorem,
      we have
      \begin{align}
      \lim_{N\to\infty}
      \int_{|t|\le M_\varepsilon}
      dt\ f(t)
       \sum_{n\geq 0} \sum_{m\geq 0}\left| s^{n+m}\right|\left\Vert
       \left({\mathcal T}_n(V_N;t)
            \sigma_t(V_N)
            \left({\mathcal T}_m(V_N;t)\right)^*-
            {\mathcal T}_n(V;t)
            \sigma_t(V)
            \left({\mathcal T}_m(V;t)\right)^*\right)\xi\right\Vert=0.
      \end{align}
  Substituting this to (\ref{fuku}), we obtain
  \begin{align}
\limsup _{N\to\infty}  \left\Vert \left(\Phi(V_N;{s}) - \Phi(V;s)\right)\xi \right\Vert\le
\varepsilon \left\Vert \xi\right\Vert    (\left\Vert V\right\Vert  +c),
  \end{align}
  for any $\varepsilon>0$, $\xi\in \mathcal H$, and $s\in\mathbb R$.
  As $\varepsilon$ is arbitrary, we have
  \begin{align}
  \lim _{N\to\infty}  \left\Vert \left(\Phi(V_N;{s}) - \Phi(V;s)\right)\xi \right\Vert=0,
  \end{align}
  for any $\xi\in \mathcal H$, and $s\in\mathbb R$.
   This proves the third statement.
   \end{proof}

\appendix 

\section{$W^*$-dynamical systems}\label{dyn}
We collect known facts on $W^*$-dynamical systems. See \cite{BR2} and \cite{DJP} and references therein
for more information.
    Let $\mathcal{H}$ be a Hilbert space and $\mathfrak{N} \subset B(\mathcal{H})$ a von Neumann algebra. 
    We denote by $\mathfrak N_*$ the predual of $\mathfrak N$.
We also use $\mathfrak N_{\mathrm sa}$ the set of all selfadjoint elements in $\mathfrak N$.
    Let ${\mathrm{Aut}}(\mathfrak{N})$ denote the group of all $*$-automorphisms of the von Neumann algebra $\mathfrak{N}$. A family $\{\tau_t\}_{t\in \mathbb{R}} \subset {\mathrm{Aut}}(\mathfrak{N})$ is called $W^*$-dynamics if 
    it is satisfied that for all $t$, $s \in \mathbb{R}$, $\tau_s \circ \tau_t = \tau_{s+t}$,   
    $\tau_0 = 1$ and $\tau_t(A)$ converges to $A$ in the $\sigma$-weak topology as $t\to 0$. 
    As $\tau_t$ is an automorphism, for any $A\in\mathfrak N$, ${\mathbb R}\ni t\mapsto \tau_t(A)\in{\mathfrak N}$ is continuous with respect to the $\sigma$-strong topology.
    The pair $(\mathfrak{N}, \tau)$ is called a $W^*$-dynamical system.
    For an entire $\tau$-analytic $A\in \mathfrak{N}$,  we denote by $\tau_z(A)$, for $z\in{\mathbb C}$
    the analytic continuation of ${\mathbb R}\ni t\mapsto \tau_t(A)\in \mathfrak{N}$.
    We denote the set of all $\tau$-entire analytic elements in $\mathfrak{N}$
    by  $\mathfrak{N}_{\tau}$. 
    Then $\mathfrak{N}_{\tau}$ is a $*$-subalgebra in $\mathfrak{N}$,
    which is dense in $\mathfrak{N}$ with respect to 
        the strong topology.
        Let $\beta \in \mathbb{R}$.
        A normal state $\omega$ on $\mathfrak{N}$ is called a ($\tau$,\ $\beta$)-KMS state if 
        for all $A\in \mathfrak{N}$ and $B\in \mathfrak{N}_\tau$, 
        \begin{equation}
            \omega(A\tau_{i\beta}(B)) = \omega(BA).
        \end{equation}
    Any ($\tau$,$\beta$)-KMS state is $\tau$-invariant.
    
    Finally, we consider the perturbation theory of the dynamics. Let $(\mathfrak{N},\tau)$ be a $W^*$-dynamical system and $Q\in \mathfrak{N}_{sa}$. For $t\in \mathbb{R}$, we define $\tau_t^Q$  by 
    \begin{equation}
        \tau_t^Q(A) = \sum_{n\geq 0}(it)^n\int_{0\leq s_n\leq \cdots \leq s_1 \leq 1}
        ds_1 \cdots ds_n\ [\tau_{s_n t}(Q),[\cdots,[\tau_{s_1 t}(Q),\tau_t(A)]\cdots]].
    \end{equation}
    for all $A\in \mathfrak{N}$.
    The integral is defined in the $\sigma$-weak topology and
    the summation is in the norm topology.\\
    Then $\tau^Q$ is a $W^*$-dynamics.
    For all $t\in \mathbb{R}$, define $E_Q^\tau(t) \in \mathfrak{N}$ by 
    \begin{align}\label{eqexp}
        E_Q^\tau(t) &= 
        \sum_{n\geq 0}(it)^n\int_{0\leq s_n\leq \cdots \leq s_1\leq 1}
        ds_1 \cdots ds_n\ \tau_{s_n t}(Q)\cdots \tau_{s_1 t}(Q) \nonumber\\
        &= \sum_{n\geq 0}i^n\int_{0\leq s_n\leq \cdots \leq s_1\leq t}
        ds_1 \cdots ds_n\ \tau_{s_n}(Q)\cdots \tau_{s_1}(Q).
    \end{align}  The integral is defined in the $\sigma$-weak topology and
    the summation is in the norm topology. 
        This  $E_Q^\tau(t)$ is a unitary element of $\mathfrak{N}$ and 
        \begin{equation}\label{ete}
            \tau_t^Q(A) = E_Q^\tau(t)\tau_t(A)E_Q^\tau(t)^*
        \end{equation}
        for all $t\in \mathbb{R}$ and $A\in \mathfrak{N}$.
         Suppose that there exists a self-adjoint operator $L$ on $\mathcal{H}$ satisfying
        $\tau_t(A) = e^{itL}Ae^{-itL}$ for all $t\in \mathbb{R}$ and $A\in \mathfrak{M}$. 
        Then, for all  $t\in \mathbb{R}$, $A\in \mathfrak{M}$ and $Q\in \mathfrak{M}_{sa}$, 
        \begin{equation}\label{tlq}
            \tau_t^Q (A)= e^{it(L+Q)}Ae^{-it(L+Q)}
        \end{equation}
        and 
        \begin{equation}\label{elq}
            E_Q^\tau(t) = e^{it(L+Q)}e^{-itL}.
        \end{equation}
         If $A,Q$ are entire-analytic for $\tau$, then $A$ is
        $\tau^Q$-entire analytic and ${\mathbb R}\ni t\mapsto E^\tau_Q(t)\in {\mathfrak N}$ has an entire analytic continuation.
        We have the expansion
 \begin{equation}
        \tau_z^Q(A) = \sum_{n\geq 0}(iz)^n\int_{0\leq s_n\leq \cdots \leq s_1 \leq 1}
        ds_1 \cdots ds_n\ [\tau_{s_n z}(Q),[\cdots,[\tau_{s_1 z}(Q),\tau_z(A)]\cdots]],
    \end{equation}
    and 
    \begin{equation}\label{eqtz}
        E_Q^\tau(z) = \sum_{n\geq 0}(iz)^n\int_{0\leq s_n\leq \cdots \leq s_1\leq 1}
        ds_1 \cdots ds_n\ \tau_{s_n z}(Q)\cdots \tau_{s_1 z}(Q)
    \end{equation}
    for all $z\in \mathbb{C}$ respectively.
    The integration and the summation converges in  the norm topology. (\cite{DJP})

     \section{Tomita-Takesaki Theory}\label{kms}
    We collect known facts Tomita-Takesaki Theory. See \cite{BR2} and \cite{DJP} and references therein
for more information.
Let $\mathcal{H}$ be a Hilbert space, $\mathfrak{M} \subset B(\mathcal{H})$ a von Neumann algebra, 
    and $\Omega\in \mathcal{H}$ a cyclic and separating vector for $\mathfrak{M}$. 
       Define the operator $S_\Omega^0$ on $\mathcal{H}$ with the domain $\mathfrak{M}\Omega$ by 
            \begin{equation}
                S_\Omega^0 A\Omega = A^*\Omega,\quad A\in \mathfrak{M}.
            \end{equation}
            Then $S_\Omega^0$ is anti-linear and closable. Let $S_\Omega$ be its closure.
       Let $S_\Omega = J_\Omega \Delta^{\frac{1}{2}}_\Omega$ be the polar decomposition of $S_\Omega$. 
          The operators $\Delta_\Omega$ and $J_\Omega$ are called respectively 
    the modular operator and the modular conjugation for $(\mathfrak{M},\Omega)$.
           The modular operator $\Delta_\Omega$ is non-singular.  The modular operator and the modular conjugation satisfy 
                       $J_\Omega \Delta_\Omega J_\Omega  = \Delta_\Omega^{-1}$,
        $\Delta^{z}_{\Omega} \Omega = \Omega$ for all $z\in \mathbb{C}$ and $\log \Delta_\Omega \Omega = 0$.  Furthermore, $J_\Omega$ interchange $\mathfrak{M}$ and $\mathfrak{M}'$, i.e., $J_\Omega \mathfrak{M}J_\Omega  = \mathfrak{M}^{'}$.
        From the fact that 
            $\Delta_\Omega^{it} \mathfrak{M}\Delta_\Omega^{-it} = \mathfrak{M}$
            all $t\in \mathbb{R}$, 
         we can define a $*$-automorphism $\sigma_t$ on $\mathfrak{M}$ by 
            $\sigma_t(A) = \Delta_\Omega^{it} A \Delta_\Omega^{-it}$, $A\in{\mathfrak M}$.
            This $W^*$-dynamics $\sigma$ is called the modular automorphism associated to $\Omega$.
            We set $L_\Omega = \log \Delta_\Omega$
           and call it the Liouvillean of $\sigma$. 
            One can check that
            the normalized state of $\omega$ is a $(\sigma,\ -1)$-KMS state.

        The set $ \mathfrak{M}_\sigma \Omega$ is a core for $S_{\Omega}$ and $\Delta_{\Omega}^{\frac{1}{2}}$.(Proposition 2.5.22 of \cite{BR1}.)
       We use the following lemma.
    \begin{lem}\label{p10}
        Let $Q\in \mathfrak{M}$, $n\in \mathbb{N}$ and $t_1,\ldots,t_n \in \mathbb{R}$ . Let $\{Q_m\}_{m\in \mathbb{N}}\subset \mathfrak{M}$ satisfy that 
        $Q_m \to Q$ and $Q_m^* \to Q^*$ strongly as $m\to\infty$. Then 
        \begin{equation}
            \Delta_\Omega^{it_n}Q_m \cdots \Delta_\Omega^{it_1}Q_m\xi\to 
            \Delta_\Omega^{it_n}Q \cdots \Delta_\Omega^{it_1}Q\xi
        \end{equation}
        as $m\to\infty$, for any $\xi\in\mathcal H$.
    \end{lem}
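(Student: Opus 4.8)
The plan is to argue by induction on the number of factors $n$, exploiting that each $\Delta_\Omega^{it_j}$ is a \emph{fixed} unitary (independent of $m$) and that left multiplication by a uniformly bounded family is jointly continuous in the strong operator topology. The first step is to record that the hypothesis $Q_m\to Q$ strongly already forces a uniform bound: for each $\xi\in\mathcal H$ the sequence $\{Q_m\xi\}_m$ converges, hence is bounded, so the uniform boundedness principle yields
$$
C:=\sup_m\|Q_m\|<\infty,
$$
which I fix once and for all. The base case $n=1$ is then immediate, since $Q_m\xi\to Q\xi$ in norm and $\Delta_\Omega^{it_1}$ is a bounded (unitary) operator, hence norm-continuous, giving $\Delta_\Omega^{it_1}Q_m\xi\to\Delta_\Omega^{it_1}Q\xi$.

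For the inductive step I assume the statement for $n-1$ factors. Setting
$$
\zeta_m:=\Delta_\Omega^{it_{n-1}}Q_m\cdots\Delta_\Omega^{it_1}Q_m\xi,\qquad
\zeta:=\Delta_\Omega^{it_{n-1}}Q\cdots\Delta_\Omega^{it_1}Q\xi,
$$
the induction hypothesis gives $\zeta_m\to\zeta$ in norm. Peeling off the leftmost factor and using that $\Delta_\Omega^{it_n}$ is unitary,
$$
\left\|\Delta_\Omega^{it_n}Q_m\zeta_m-\Delta_\Omega^{it_n}Q\zeta\right\|
=\left\|Q_m\zeta_m-Q\zeta\right\|
\le \|Q_m\|\,\|\zeta_m-\zeta\|+\left\|(Q_m-Q)\zeta\right\|.
$$
The first summand is at most $C\|\zeta_m-\zeta\|\to 0$ by the induction hypothesis together with the uniform bound, while the second summand tends to $0$ because $\zeta$ is a \emph{fixed} vector and $Q_m\to Q$ strongly. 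This closes the induction and proves the lemma.

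The only genuine subtlety, and the place where a naive argument fails, is that the inner vector $\zeta_m$ depends on $m$, so one cannot simply feed it into the strong convergence of $Q_m$. The remedy is exactly the two-term split above: the uniform bound $C$ absorbs the $m$-dependent discrepancy $\zeta_m-\zeta$, while strong convergence is invoked only against the $m$-independent limit $\zeta$. I remark that this argument uses only $Q_m\to Q$ strongly (and the uniform bound it entails); the adjoint convergence $Q_m^*\to Q^*$ is not actually needed for this statement, and is in any case automatic in the self-adjoint setting in which Lemma~\ref{p10} is applied in Lemma~\ref{lem8.5}.
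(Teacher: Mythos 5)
Your proof is correct. Note that the paper itself gives no proof of Lemma~\ref{p10} at all --- it is quoted in the appendix as a known fact --- so there is nothing to compare against; your induction, with the split
$\|Q_m\zeta_m-Q\zeta\|\le\|Q_m\|\,\|\zeta_m-\zeta\|+\|(Q_m-Q)\zeta\|$ and the uniform bound $C=\sup_m\|Q_m\|$ supplied by the uniform boundedness principle, is precisely the standard argument the authors are implicitly invoking. Your closing remark is also accurate: the hypothesis $Q_m^*\to Q^*$ is not needed for the convergence as stated (it would matter only for conclusions involving adjoints, e.g.\ $*$-strong convergence of the product), and in the paper's actual application inside Lemma~\ref{lem8.5} the operators $V_N$ are self-adjoint, so that hypothesis is automatic there.
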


The following theorem is shown in \cite{A}.
    \begin{thm}\label{p9}
        Let $Q\in \mathfrak{M}_{sa}$. Then $\Omega \in D\left(e^{\frac{L_\Omega+Q}{2}}\right)$.\\
        Define the vectors $\Omega_Q := {e^{\frac{L_\Omega+Q}{2}}\Omega}$, 
        and 
        the functional $\omega_Q$ on $\mathfrak{M}$ by 
        \begin{equation}
            \omega_Q(A) = (\Omega_Q,\ A\Omega_Q) 
        \end{equation}
        for all $A\in \mathfrak{M}$.
        Then the following conditions are satisfied;
        \begin{enumerate}
            \renewcommand{\labelenumi}{(\alph{enumi})}
            \item  $\Omega_Q$ is cyclic and separating for $\mathfrak{M}$;
            \item The normalized state of $\omega_Q$ is a ($\sigma^Q$,\ $-1$)-KMS state;
            \item By (a), we can define the modular operator $\Delta_{\Omega_Q}$ 
            and the modular conjugation $J_{\Omega_Q}$ for the pair $(\mathfrak{M},\ \Omega_Q)$.
                Then $J_\Omega = J_{\Omega_Q}$,
                \begin{equation}
                    \log \Delta_{\Omega_Q} = L_\Omega+Q-J_\Omega QJ_\Omega 
                \end{equation}
                and
                \begin{equation}
                    \sigma_t^Q(A) = \Delta_{\Omega_Q}^{it} A \Delta_{\Omega_Q}^{-it} 
                    = e^{it(L_\Omega+Q)}Ae^{-it(L_\Omega+Q)};
                \end{equation}
            \item Assume that a sequence $\{Q_n\}_{n\in \mathbb{N}}\subset \mathfrak{M}_{sa}$ 
            converges to $Q \in \mathfrak{M}_{\mathrm sa}$ strongly.
            Then $\Omega_{Q_n} \to \Omega_Q$ and $\omega_{Q_n} \to \omega_Q$ in the norm topology;
            \item For any $0\leq s_n\leq \cdots \leq s_1\leq 1$, 
            $\Omega$ is in the domain of $e^{\frac 12 s_nL}Q\cdot e^{\frac 12(-s_n+s_{n-1})L}Q\cdots 
        e^{\frac12 (-s_2+s_1)L}Q$ and 
        $\Omega_Q$ has an expansion        
            \begin{equation}
       \sum_{n\geq 0}\left( \frac{1}{2}\right)^n\int_{0\leq s_n\leq \cdots \leq s_1\leq 1}
        ds_1 \cdots ds_n e^{\frac 12 s_nL}Q\cdot e^{\frac12 (-s_n+s_{n-1})L}Q\cdots 
        e^{\frac12 (-s_2+s_1)L}Q\Omega.
    \end{equation}

        \end{enumerate}
    \end{thm}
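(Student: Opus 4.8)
The plan is to establish part (e) first, since the existence of $\Omega_Q$ and the domain statement $\Omega\in D(e^{(L+Q)/2})$ are its immediate consequences, and then to read off (d) and the interlocking modular statements (a)--(c) from the expansion together with the cocycle identities of Appendix~\ref{dyn}. For (e) I would start from the real--time identity obtained by combining \eqref{elq} with $\Delta^{it}\Omega=\Omega$, namely $e^{it(L+Q)}\Omega=E_Q^\sigma(t)e^{itL}\Omega=E_Q^\sigma(t)\Omega$, and then introduce the vector--valued family
\[
\Psi(w):=\sum_{n\ge0}w^n\int_{0\le s_n\le\cdots\le s_1\le1}ds_1\cdots ds_n\,
e^{w s_n L}Q\,e^{w(s_{n-1}-s_n)L}Q\cdots e^{w(s_1-s_2)L}Q\,\Omega ,
\]
which reduces at $w=\tfrac12$ to the series in (e) and at $w=it$ to $E_Q^\sigma(t)\Omega$ (after the substitution $s_k\mapsto ts_k$ and the regrouping $e^{\zeta L}Q e^{-\zeta L}=\sigma_{-i\zeta}(Q)$, using $\Delta^{-it}\Omega=\Omega$). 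The first task is the uniform bound $\|\text{integrand}\|\le\|Q\|^n$ for $0\le\mathrm{Re}\,w\le\tfrac12$, which I would prove by an iterated Hadamard three--lines argument on the tube $0\le\mathrm{Re}\,\zeta_n\le\cdots\le\mathrm{Re}\,\zeta_1\le\tfrac12$: on the boundary faces one collapses the innermost factor via $\Delta^{-it}Q\Omega=\sigma_{-t}(Q)\Omega$ and trades the outermost $\Delta^{1/2}$ using $\|\Delta^{1/2}C\Omega\|=\|C^*\Omega\|\le\|C\|$, reducing the order at each step. Since the simplex has volume $1/n!$, the $n$-th term is bounded by $|w|^n\|Q\|^n/n!$, so $\Psi$ is norm--analytic on the strip with $\|\Psi(\tfrac12)\|\le e^{\|Q\|/2}$.

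To identify $\Psi(w)$ with $e^{w(L+Q)}\Omega$ I would use the closed--operator criterion: $\Omega\in D(e^{w(L+Q)})$ with value $\Psi(w)$ if and only if $(e^{\bar w(L+Q)}\xi,\Omega)=(\xi,\Psi(w))$ for all $\xi$ in the spectral core $\mathcal D=\bigcup_M\mathbf 1_{[-M,M]}(L+Q)\mathcal H$, because $(e^{\bar w(L+Q)})^*=e^{w(L+Q)}$. For $\xi\in\mathcal D$ the left side is an entire function of $w$ (a Laplace transform of a compactly supported spectral measure), the right side is holomorphic on the strip, and they agree on the imaginary axis by the cocycle identity $e^{it(L+Q)}\Omega=E_Q^\sigma(t)\Omega$; the identity theorem then forces agreement throughout. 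Taking $w=\tfrac12$ gives $\Omega\in D(e^{(L+Q)/2})$ and $\Omega_Q=\Psi(\tfrac12)$, which is (e). Part (d) follows exactly as in Lemma~\ref{lem9} and Lemma~\ref{lem8.5}(3): if $Q_m\to Q$ strongly with $c:=\sup_m\|Q_m\|<\infty$ (uniform boundedness), then each term of the expansion converges strongly by Lemma~\ref{p10}, the uniform bound $\|Q_m\|^n/(2^n n!)$ licenses dominated convergence over the simplex and the sum, so $\|\Omega_{Q_m}-\Omega_Q\|\to0$, and $\|\omega_{Q_m}-\omega_Q\|\le(\|\Omega_{Q_m}\|+\|\Omega_Q\|)\,\|\Omega_{Q_m}-\Omega_Q\|\to0$.

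The parts (a), (b), (c) I would handle together, as they are the genuine core of Araki's theorem. I would first verify that the normalised functional $\omega_Q/\omega_Q(1)$ satisfies the $(\sigma^Q,-1)$-KMS condition $\omega_Q(A\,\sigma^Q_{-i}(B))=\omega_Q(BA)$ for $A\in\mathfrak M$ and $\tau^Q$-analytic $B$, by a direct analytic--continuation computation that expresses $\sigma^Q$ through the cocycle \eqref{ete} and expands $\Omega_Q=e^{(L+Q)/2}\Omega$ via (e), reducing everything to the unperturbed relations $S_\Omega A\Omega=A^*\Omega$ and $\Delta^{it}\Omega=\Omega$ on the analytic core; cyclicity and the separating property (a) then follow, since a KMS vector in the standard representation is cyclic and separating. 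For (c) I would take as Ansatz $\tilde\Delta=e^{L+Q-JQJ}$ and $\tilde J=J$, noting that $JQJ\in\mathfrak M'$, so that conjugation by $e^{it(L+Q-JQJ)}$ agrees on $\mathfrak M$ with conjugation by $e^{it(L+Q)}$, i.e.\ with $\sigma^Q_t$ by \eqref{tlq}; checking on the analytic core that $\tilde J\,\tilde\Delta^{1/2}$ acts as $A\Omega_Q\mapsto A^*\Omega_Q$ identifies it with $S_{\Omega_Q}$, and uniqueness of the polar decomposition yields $J_{\Omega_Q}=J$ and $\log\Delta_{\Omega_Q}=L+Q-JQJ$, whence the displayed formulas for $\sigma^Q_t$.

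The main obstacle is twofold. The analytic heart of (e) is the uniform integrand bound via the iterated three--lines argument and, above all, the rigorous identification $\Psi(w)=e^{w(L+Q)}\Omega$, where everything is unbounded and domains must be tracked carefully; the adjoint--criterion formulation above is what makes this clean. The second, and genuinely harder, difficulty is the direct KMS verification underlying (b)--(c): turning the formal manipulation of the cocycle and the Dyson expansion into a rigorous analytic continuation on a suitable core, which is precisely the step that pins down the perturbed modular objects and is the reason Araki's original argument is delicate.
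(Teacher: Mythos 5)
First, a point of orientation: the paper contains \emph{no} proof of Theorem \ref{p9}. It is stated in Appendix B as a known result, with the attribution that it is shown in \cite{A} (see also \cite{DJP}), so your proposal can only be measured against Araki's argument in the literature, not against anything in this paper. Measured that way, your plan for part (e) is essentially the standard (and correct) one: the uniform bound $\Vert \Delta^{\zeta_n}Q\Delta^{\zeta_{n-1}-\zeta_n}Q\cdots Q\Omega\Vert\le\Vert Q\Vert^n$ on the tube via iterated three-lines arguments, collapsing boundary factors by $\Delta^{it}$-covariance and trading $\Delta^{1/2}$ via $\Vert\Delta^{1/2}C\Omega\Vert=\Vert C^*\Omega\Vert\le\Vert C\Vert$, followed by identification of $\Psi(w)$ with $e^{w(L+Q)}\Omega$ through the adjoint criterion on a spectral core and the identity theorem, anchored at $e^{it(L+Q)}\Omega=E^\sigma_Q(t)\Omega$ on the imaginary axis. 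Your closed-operator formulation of that identification is exactly the right way to handle the domains.

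Two steps, however, are genuinely broken as written. (i) In part (d) you claim each term of the expansion converges strongly ``by Lemma~\ref{p10}''. Lemma~\ref{p10} concerns products $\Delta^{it_n}Q_m\cdots\Delta^{it_1}Q_m\xi$ with purely \emph{imaginary} powers of $\Delta$, i.e.\ unitaries; the terms of the expansion in (e) contain the \emph{real} powers $e^{\frac12 s_kL}=\Delta^{s_k/2}$, which are unbounded, and strong convergence $Q_m\to Q$ does not pass through an unbounded operator. (This is precisely why the paper can invoke Lemma~\ref{p10} in Lemma~\ref{lem8.5}(3), where only imaginary powers appear, while the continuity statement (d) is quoted from Araki rather than proved this way.) The genuine proof establishes convergence on the distinguished boundary of the analyticity tube --- where Lemma~\ref{p10} does apply, after the $\Delta^{1/2}$ factors are converted into adjoints via $S_\Omega$ --- and then propagates it into the tube using the uniform bound $c^n$ together with a Poisson/Vitali-type argument for uniformly bounded analytic functions, e.g.\ applied to the jointly analytic scalars $(\zeta,\zeta')\mapsto\bigl(F_m(\bar\zeta),F_{m'}(\zeta')\bigr)$. (ii) In part (a) you invoke the principle that ``a KMS vector in the standard representation is cyclic and separating''. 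This is false: a normal KMS state need not be faithful --- its support is merely a central, invariant projection; on $M_2\oplus M_2$ with dynamics generated by $H_1\oplus H_2$, the state $\mathrm{Gibbs}\oplus 0$ is KMS and not faithful --- and even a vector implementing a faithful state (hence separating) need not be cyclic. So (a) does not follow from (b) by any general principle; it needs the specific form $\Omega_Q=e^{(L+Q)/2}\Omega$. For the separating half one can argue: the support $z$ of $\omega_Q$ is central, hence fixed by $\sigma$ and commuting with $Q$, hence commuting with $e^{(L+Q)/2}$; then $0=(1-z)\Omega_Q=e^{(L+Q)/2}(1-z)\Omega$ forces $(1-z)\Omega=0$ by injectivity of $e^{(L+Q)/2}$, so $z=1$ since $\Omega$ is separating. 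Cyclicity requires a further input --- either a companion argument on $\mathfrak{M}'$, on which $\mathrm{Ad}\bigl(e^{it(L+Q)}\bigr)$ restricts to the modular dynamics, or the natural-cone membership $\Omega_Q\in\mathcal P$ proved in \cite{A,DJP}, since a separating cone vector is automatically cyclic. A smaller caveat of the same kind occurs in (c): agreement of $Je^{(L+Q-JQJ)/2}$ with $A\Omega_Q\mapsto A^*\Omega_Q$ on $\mathfrak{M}\Omega_Q$ only yields the inclusion $S_{\Omega_Q}\subset Je^{(L+Q-JQJ)/2}$; to get equality you need either that $\mathfrak{M}\Omega_Q$ is a core, or the companion inclusion for $\mathfrak{M}'\Omega_Q$ followed by taking adjoints.
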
 
    Due to (c) of this theorem, let us denote $J$ the modular conjugation 
    if there is no danger of confusion.
    
    \section{Properties of $f$}\label{fprop}
    In this section we consider some properties of $f$.
 Clearly $f$ is in $L^1(\mathbb R)$
    by Fubini's Theorem.
    This $f$ and $F$ in (\ref{Fdef}) are Fourier transform of each other.
   \begin{lem}\label{ap}
    For  $f$ defined in (\ref{fdef}) and $F$ defined in (\ref{Fdef}), we have
    \begin{equation}\label{Ff}
        \lim_{R\to \infty} \int_{-R}^R F(w)e^{-iwt}dw = f(t)    
    \end{equation}
    for all $t \in \mathbb{R}\setminus\{0\}$ and 
    \begin{equation}\label{ff}
        F(x) = \int_{-\infty}^{\infty} f(t)e^{ixt}dt
    \end{equation}
    for all $x\in \mathbb{R}$.
\end{lem}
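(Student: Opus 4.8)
The plan is to prove the two Fourier identities (\ref{Ff}) and (\ref{ff}) separately, starting with (\ref{ff}) since it is the absolutely convergent direction. First I would insert the series representation of $f$ from (\ref{fdef}) into $\int_{\mathbb R} f(t)e^{ixt}\,dt$ and exchange the sum with the integral; this is legitimate because $\sum_{n\ge 0}\frac{2}{n+\frac12}\int_{\mathbb R}e^{-2\pi(n+\frac12)|t|}\,dt<\infty$, so Fubini applies to the partial sums. The only elementary integral needed is
\begin{equation}
    \int_{\mathbb R} e^{-a|t|}e^{ixt}\,dt=\frac{2a}{a^2+x^2},\qquad a>0,
\end{equation}
obtained by splitting the integral at $t=0$. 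This reduces the task to evaluating a series of Lorentzian type with poles at $a_n=2\pi(n+\frac12)$, which collapses in closed form once one invokes the Mittag--Leffler (partial fraction) expansion
\begin{equation}
    \tanh z=\sum_{n\ge 0}\frac{2z}{z^2+\left(n+\frac12\right)^2\pi^2},
\end{equation}
itself obtained by logarithmically differentiating the product formula for $\cosh$. Matching the constants and comparing with (\ref{Fdef}) then identifies the sum with $F$ (equivalently with $G(e^x)$), which is (\ref{ff}).

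For (\ref{Ff}) the point to watch is that $F$ decays only like $|w|^{-1}$, hence $F\notin L^1(\mathbb R)$ and the transform genuinely must be read as the symmetric truncated limit $\lim_{R\to\infty}\int_{-R}^{R}$ written in the statement. My approach is to deduce (\ref{Ff}) from (\ref{ff}) by the pointwise Fourier inversion theorem: having shown in (\ref{ff}) that $F$ is the Fourier transform of the $L^1$ function $f$, and observing that $f$ is continuously differentiable away from the origin, the Dirichlet--Jordan test guarantees that the symmetric inverse integral of $F$ recovers $f(t)$ at every $t\neq 0$. Since both $f$ and $F$ are even, the sign of the exponent is immaterial, so this inverse integral is exactly $\lim_{R\to\infty}\int_{-R}^{R}F(w)e^{-iwt}\,dw$, which is (\ref{Ff}). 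The excluded point $t=0$ is precisely the logarithmic singularity of $f$, where no pointwise value is claimed.

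I expect the main obstacle to be the conditional convergence underlying (\ref{Ff}). If one prefers to establish it directly rather than through inversion, the natural route is contour integration of $F(w)e^{-iwt}$, closing in the lower half-plane when $t>0$ and in the upper half-plane when $t<0$, and collecting residues at the simple poles $w=\pm i\pi(2n+1)$ (the apparent pole at $w=0$ being removable since $F(0)=\tfrac12$). The difficulty is that $F$ is not integrable, so the contributions of the closing semicircles do not vanish automatically; since the poles recede to infinity, one must take the arcs along radii $R_k$ chosen midway between consecutive poles, where $|\cosh(w/2)|$ stays bounded below, and only then apply a Jordan's-lemma estimate to discard them. Routing the argument through the inversion theorem as above compresses exactly this analytic subtlety into a single citation and is the cleaner path; in either case the residues reassemble the geometric series defining $f$, closing the proof.
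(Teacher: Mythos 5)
Your route is the reverse of the paper's: the paper first proves (\ref{Ff}) by the residue theorem (a rectangular contour of height $2\pi n$, with poles at $2\pi i\left(k+\frac12\right)$), and only afterwards deduces (\ref{ff}) from (\ref{Ff}) via $L^2$--Plancherel theory and the $L^1$ inversion theorem; you propose to prove (\ref{ff}) first, by an elementary series computation, and then recover (\ref{Ff}) by pointwise inversion. The idea is legitimate and in principle more elementary, but there is a genuine gap: the constants in your first step do not match. Carrying out your own computation with $a_n=2\pi\left(n+\frac12\right)$,
\begin{equation}
\int_{\mathbb R}f(t)e^{ixt}\,dt=\sum_{n\ge 0}\frac{2}{n+\frac12}\cdot\frac{2a_n}{a_n^2+x^2}
=\sum_{n\ge 0}\frac{8\pi}{4\pi^2\left(n+\frac12\right)^2+x^2},
\end{equation}
whereas the Mittag--Leffler expansion you quote, evaluated at $z=x/2$, gives
\begin{equation}
F(x)=\frac{\tanh(x/2)}{x}=\sum_{n\ge 0}\frac{4}{4\pi^2\left(n+\frac12\right)^2+x^2}.
\end{equation}
So the series you compute equals $2\pi F(x)$, not $F(x)$; a sanity check at $x=0$ gives $\int_{\mathbb R}f(t)\,dt=\frac{2}{\pi}\sum_{n\ge 0}\left(n+\frac12\right)^{-2}=\pi$, while $F(0)=\frac12$. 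The step ``matching the constants \dots identifies the sum with $F$'' therefore fails. Your second step has the mirror-image defect: for the convention $g(x)=\int f(t)e^{ixt}\,dt$, the pointwise (Dirichlet--Jordan) inversion theorem reads $f(t)=\frac{1}{2\pi}\lim_{R\to\infty}\int_{-R}^{R}g(w)e^{-iwt}\,dw$, and you have dropped the $\frac{1}{2\pi}$. The two missing factors cancel, so you land on the true statement (\ref{Ff}), but by compensating errors rather than by a proof.

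What your computation actually uncovers is that the two displayed equations of the Lemma are mutually inconsistent as normalized: (\ref{Ff}) and (\ref{ff}) together would force $f=2\pi f$. The residue argument (the paper's proof, and also your fallback contour sketch, which is essentially the same and is sound once the closing arcs are taken between consecutive poles) does establish (\ref{Ff}); consequently (\ref{ff}) can only hold in the corrected form $F(x)=\frac{1}{2\pi}\int_{\mathbb R}f(t)e^{ixt}\,dt$, equivalently with an extra prefactor $\frac{1}{2\pi}$ in the definition (\ref{fdef}) of $f$. The paper's own proof of (\ref{ff}) obscures this point because it cites Theorems 9.13 and 9.14 of Rudin, which are stated for the unitary normalization $dm=dt/\sqrt{2\pi}$, while all integrals here are taken against plain $dt$; tracking the constants through that argument also yields $F(x)=\frac{1}{2\pi}\int f(t)e^{ixt}\,dt$. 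So the gap in your proposal is real, but it is the same $2\pi$ that is missing from the Lemma itself (and which propagates into the spectral identity $G(e^x)=\int f(t)e^{itx}dt$ used in Lemma \ref{lem6}). If you keep your route, the honest conclusion of your first step is $\int f(t)e^{ixt}\,dt=2\pi F(x)$, after which your inversion step, applied with the correct constant, gives (\ref{Ff}).
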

\begin{proof}
In order to prove this we consider the analytic continuation of $F$ (defined in (\ref{Fdef})) to
    $\mathbb{C}\setminus \left(2\pi i\left(\mathbb{Z}+\frac{1}{2}\right)\right)$
    :
     \begin{equation}
        F(z) = \begin{cases}\displaystyle
            \frac{e^z-1}{e^z+1} \frac{1}{z} & \left(z \in\mathbb{C}\setminus \left(2\pi i\left(\mathbb{Z}+\frac{1}{2}\right)\cup\{0\}\right)\right)   \\
            \\
            \displaystyle\frac{1}{2} & (z = 0).
        \end{cases}
    \end{equation}
The claim (\ref{Ff}) for $t<0$ follows applying the residue theorem 
to $F(z)e^{-izt}$ along the path in the following picture.
\begin{center}
    \begin{tikzpicture}
        \begin{scope}[thick,font=\scriptsize]
        \draw [->] (-5,0) -- (5,0) node [above left]  {$\mathrm Re\{z\}$};
        \draw [->] (0,-1) -- (0,4) node [below right] {$\mathrm Im\{z\}$};
    
        \iffalse
        \else
        \draw(-3,0)[very thick] rectangle (3,2);
        \draw [->] (-3,0)[very thick] -- (0,0) node [above right]  {$\gamma_1$} [thick]-- (3,0) ;
        \draw [->] (3,0)[very thick] -- (3,1) node [right]  {$\gamma_2$} [thick]-- (3,2) ;
        \draw [->] (3,2)[thick] -- (0,2) node [below right]  {$\gamma_3$} [thick]-- (-3,2) ;
        \draw [->] (-3,2)[thick] -- (-3,1) node [right]  {$\gamma_4$} [thick]-- (-3,0) ;
    
        \draw(-3,0) node [below] {$-R$};
        \draw(3,0) node [below] {$R$};
        \draw(0,2) node[above right] { $2\pi in$};
        \fi
        \end{scope}
    \end{tikzpicture}\\
    \end{center}
Each pole $z = 2\pi i\left(n+\frac{1}{2}\right)$ of $F(z)e^{-itz}$ is a simple pole of $F(z)e^{-izt}$, and
    \begin{align}
        \mathrm{Res}\left(F(z)e^{-izt};2\pi i\left(n+\frac{1}{2}\right)\right)
        &= \lim_{z\to 2\pi i\left(n+\frac{1}{2}\right)}\left(z-2\pi i\left(n+\frac{1}{2}\right)\right)F(z)e^{-izt}\nonumber\\
        &= -\frac{i}{\pi \left(n+\frac{1}{2}\right)}e^{2\pi \left(n+\frac{1}{2}\right)t}.
    \end{align}
 From the residue Theorem, we obtain
     \begin{align}
        \int_{-R}^{R} dw\ F(w)e^{-iwt}
        &= \int_{\gamma_1}F(z)e^{-izt}dz\nonumber\\
        &= 2\pi i\sum_{k=0}^{n-1} \mathrm{Res}\left(F(z)e^{-izt};2\pi i\left(k+\frac{1}{2}\right)\right)\nonumber\\
        & \qquad -\int_{\gamma_2}F(z)e^{-izt}dz-\int_{\gamma_3}F(z)e^{-izt}dz-\int_{\gamma_4}F(z)e^{-izt}dz.
    \end{align}
    The integral along $\gamma_3$ goes to $0$ as $n\to\infty$.
The integral along $\gamma_2,\gamma_4$ goes to $0$ as $R\to\infty$.
Hence we obtain (\ref{Ff}) for all $t<0$.

To show that (\ref{Ff}) for $t>0$, we note
    \begin{align}
        \int^R_{-R}F(w)e^{-iwt}dw 
        = \overline{\int^R_{-R}\overline{F(w)}e^{-iw(-t)}dw} = \overline{\int_{-R}^R F(w)e^{-iw(-t)}dw} .
    \end{align}
    The final term converges to $f(-t) = f(t)$ as $R\to \infty$ due to $-t<0$.
    Hence we obtain (\ref{Ff}) for all $t\neq 0$.
    
    Finally we show (\ref{ff}).
   Note that $F|_{\mathbb{R}} \in L^2(\mathbb{R})$. 
    Let $\hat{F} \in L^2(\mathbb{R})$ be the Fourier transform of $F|_{\mathbb{R}}$.
    For $R > 0$, we define the function $\phi_R$ on $\mathbb{R}$ by 
    \begin{equation}
        \phi_R(t) = \int_{-R}^R F(x)e^{-ixt}dx.
    \end{equation}
    for all $t\in \mathbb{R}$.
    Then we have 
    $|| \phi_R - \hat{F}||_2 \to 0$ as $R\to \infty$ by Theorem 9.13 of \cite{R}.
    Therefore there exists a sequence $\{n_k\}_{k\in\mathbb{N}}$ such that $\phi_{n_k}(t) \to \hat{F}(t)$ as $k\to\infty$
    a.e. $t$.
    Since we have shown \begin{equation}\lim_{R\to \infty} \phi_R(t)  = f(t)\end{equation} for all $t\in \mathbb{R}\setminus \{0\}$, 
    it follows that $f(t) = \hat{F}(t)$ a.e. $t$.
    Hence $\hat{F} = f$ belongs to $L^1(\mathbb{R})$.
    By Theorem 9.14 of \cite{R}, we have
    \begin{equation}
    F(x) = \int_{-\infty}^{\infty} \hat{F}(t)e^{ixt} dt = \int_{-\infty}^{\infty} f(t)e^{ixt}dt   
    \end{equation} a.e. $x$.
    The left-hand side is  continuous. The right-hand side is continuous too due to $f\in L^1(\mathbb{R})$.
    Therefore the equality holds for all $x\in\mathbb R$.

\end{proof}


\begin{thebibliography}{10}
\bibitem[A]{A}
H. Araki:
\newblock {Expansionals in Banach algebras},
\newblock{Ann. Sci. Ecole Norm. Sup. 6 }, {\bf 6}, 67, 1973.
\bibitem[BR1]{BR1}
O. Bratteli and D. W. Robinson:
\newblock {Operator Algebras and Quantum Statistical Mechanics 1}
\newblock {\em Spinger-Verlag} 1987.

\bibitem[BR2]{BR2}
O. Bratteli and D. W. Robinson:
\newblock {Operator Algebras and Quantum Statistical Mechanics 2}
\newblock {\em Spinger-Verlag} 1996.

\bibitem[DJP]{DJP}
J. Derezinski, V. Jaksic and C.-A. Pillet:
\newblock {Perturbation theory of $W^*$-dynamics, Liouvilleans and KMS-states}.
\newblock {\em Reviews in Mathematical Physics},15-05, 447--489, 2003.

\bibitem[H]{H}
M.B. Hastings:
\newblock {Quantum belief propagation: An algorithm for thermal quantum systems}.
Phys. Rev. B {\bf 76}, 201102 2007.

\bibitem[R]{R}
W. Rudin:
\newblock {Real and Complex Analysis}.
\newblock {\em McGraw-Hill Publishing Company}, 1966.

\end{thebibliography}
\end{document}